\DeclareMathOperator{\E}{\mathbb{E}}
\newtheorem{theorem}{Theorem}
\newtheorem{corollary}[theorem]{Corollary}
\newtheorem{lemma}[theorem]{Lemma}
\newtheorem{claim}[theorem]{Claim}
\newtheorem{example}{Example}
\newcommand{\CLP}{\mathrm{CLP}}
\newcommand{\scoretype}{\mathrm{ScoreIndex}}
\newcommand{\abs}[1]{\lvert #1 \rvert}
\newcommand*{\rev}{\textsc{reverse}{}}
\newcommand{\LPS}{\textsc{LP-solve}}
\newcommand{\Ra}{\mathcal{R}_{\vec{\alpha}}}
\newcommand{\NP}{\mathsf{NP}}
\begin{document}

\title{New Approximations for Coalitional Manipulation in General Scoring Rules}

\date{}
\author{ Orgad Keller\\  \texttt{orgad.keller@gmail.com}\\ 
	  Department of Computer Science, Bar-Ilan University, Israel
	\and
        Avinatan Hassidim\\  \texttt{avinatan@cs.biu.ac.il} \\
        Department of Computer Science, Bar-Ilan University, Israel
       \and
        Noam Hazon\\  \texttt{noamh@ariel.ac.il} \\
         Department of Computer Science, Ariel University, Israel
      }
 
\maketitle

\begin{abstract}
We study the problem of \emph{coalitional manipulation}---where $k$ manipulators try to manipulate an election on $m$ candidates---under general scoring rules, with a focus on the Borda protocol. We do so  both in the weighted and unweighted settings.

For these problems, recent approaches to approximation tried to minimize $k$, the number of manipulators needed to make the preferred candidate win (thus assuming that the number of manipulators is not limited in advance), we focus instead on minimizing the maximum score obtainable by a non-preferred candidate. 

In the strongest, most general setting, we provide an algorithm for any  scoring rule as described by a vector $\vec{\alpha}=(\alpha_1,\ldots,\alpha_m)$: for some $\beta=O(\sqrt{m\log m})$, it obtains an additive approximation equal to $W\cdot \max_i\abs{\alpha_{i+\beta}-\alpha_i}$, where $W$ is the sum of voter weights. In words, this factor is the maximum difference between two scores in $\vec{\alpha}$ that are $\beta$ entries away, multiplied by $W$. The unweighted equivalent is provided as well.

For Borda, both the weighted and unweighted variants are known to be $\NP$-hard. For the unweighted case, our simpler algorithm provides a randomized, additive $O(k \sqrt{m \log m} )$ approximation; in other words, if there exists a strategy enabling the preferred candidate to win by an $\Omega(k \sqrt{m \log m} )$ margin, our method, with high probability, will find a strategy enabling her to win (albeit with a possibly smaller margin). It thus provides a somewhat stronger guarantee compared to the previous methods, which implicitly implied (with respect to the original $k$) a strategy that provides an $\Omega(m)$-additive approximation to the maximum score of a non-preferred candidate: when $k$ is  $o(\sqrt{m/\log m})$, our strategy thus provides a stronger approximation. 

For the   weighted case, our generalized algorithm provides an   $O(W \sqrt{m \log m} )$-additive approximation, where $W$ is the sum of voter weights. This is a clear advantage over previous methods: some of them do not generalize to the weighted case, while others---which approximate the number of manipulators---pose restrictions on the weights of extra manipulators added.

We note that our algorithms for Borda can also be viewed as a $(1+o(1))$-multiplicative approximation since the values we approximate have  natural  $\Omega(km)$ (unweighted) and $\Omega(Wm)$ (weighted) lower bounds.


Our methods are novel and adapt techniques from multiprocessor scheduling 
by carefully rounding an exponentially-large configuration linear program that is solved by using the ellipsoid method with an efficient separation oracle. We believe that such methods could be beneficial in social choice settings as well. 
\end{abstract}


%
%


\section{Introduction}
Elections are one of the pillars of democratic societies, and are  an important part of social choice theory. In addition they have played a major role in multiagent systems, where a group of intelligent agents would like to reach a joint decision~\cite{DBLP:conf/ijcai/EphratiR93}.
In its essence, an election consists of $n$ agents (also called voters) who need to decide on a winning candidate among $m$ candidates. In order to do so, each voter reveals a ranking of the candidates according to his preference and the winner is then decided according to some protocol.

Ideally in voting, we would like the voters to be truthful, that is, that their reported ranking of the candidates will be their true one. 
However, almost all voting rules are prone to manipulation: Gibbard and Satterthwaite~\cite{gibbard1973manipulation,satterthwaite1975strategy} show that for any reasonable preference-based voting system with at least $3$ candidates, voters might benefit from reporting a ranking different than their true one in order to make sure that the candidate they prefer the most wins. Furthermore, several voters might decide to collude, to form a coalition and then to coordinate their votes in such a way that a specific candidate $p$ (hereafter the \emph{preferred candidate}) will prevail. Such a setting is reasonable especially when the voters are agents that are operated by one party of interest.

For some time, the hope for making voting protocols immune to manipulations at least \emph{in practice} relied on computational assumptions: for several common voting protocols, it was shown that computing a successful voting strategy for the manipulators is $\NP$-hard~\cite{bartholdi1989computational,DBLP:journals/jacm/ConitzerSL07,DBLP:conf/ijcai/XiaZPCR09,faliszewski2010ai}. However, as it is many times the case, approximation algorithms and heuristics were devised in order to overcome the $\NP$-hardness albeit with some compromises on the quality of the resulting strategy. This paper fits within this scheme.

In this paper we focus on general scoring rules $\Ra$ and in particular on the Borda voting rule. We first study the problem of (constructive) unweighted coalitional manipulation (\emph{UCM})\footnote{The problem was called CCUM, for ``constructive coalitional unweighted manipulation'', in~\cite{DBLP:journals/ai/ZuckermanPR09}.}: assume that $k$ additional voters (hereafter the manipulators), all of them preferring a specific candidate $p$, can be added to the voting system, thus forming a coalition. Also assume that all $n$ original voters (hereafter the non-manipulators) voted first (or equivalently, that the non-manipulators are truthful and that their preferences are known). Find a strategy for the manipulators telling each one of them how to vote so that $p$ wins, if such strategy exists.  We call such a strategy a \emph{$p$-winning strategy}. In the \emph{weighted} variant (\emph{WCM}), the manipulators are weighted; essentially this means that points awarded by a voter to a candidate are multiplied by the voter's weight.

$\Ra$-WCM, for all positional scoring rules $\Ra$, except plurality-like rules, was shown to be $\NP$-hard when $m\geq 3$~\cite{DBLP:journals/jacm/ConitzerSL07,DBLP:journals/jcss/HemaspaandraH07,DBLP:journals/jair/ProcacciaR07}. Therefore, Borda-WCM is $\NP$-hard. Borda-UCM eluded  researchers for some time; it was first conjectured and finally proven to be $\NP$-hard~\cite{DBLP:conf/aaai/DaviesKNW11,DBLP:conf/ijcai/BetzlerNW11}. 

As a way of overcoming the hardness, recent research~\cite{DBLP:journals/ai/ZuckermanPR09}
focused on an approximation to the minimum number of manipulators needed to be added to the system in order to guarantee that the preferred candidate $p$ would win. For Borda-UCM, they showed that if there exists a $p$-winning strategy for $k$ manipulators, then they will find a $p$-winning strategy with at most one additional manipulator -- besides the $k$ given by the problem definition. For Borda-WCM,  they showed that if there exists a $p$-winning strategy using the $k$ given weighted manipulators, they will find a $p$-winning strategy using additional manipulators, if the sum of weights of the additional manipulators equals the maximum over the weights of the $k$ original manipulators.
 
 This kind of approximation might seem a bit problematic: first, the ability to add a manipulator is a strong operation, perhaps too strong; for instance, for Borda-UCM, adding a manipulator adds $\Omega(m)$  to the difference between $p$ and its highest-scoring competitor. Second, while in some cases it might be reasonable that the party behind the manipulators can add another manipulator to the system, in many cases we do not expect this to be true. Furthermore, in the weighted case assumptions are needed to be made on the weight of the additional manipulators -- also a problematic aspect. Instead, it is interesting to ask what can we assert---assuming that the number of manipulators \emph{cannot be changed}---on the ability to promote a specific candidate $p$ given the non-manipulator scores of all candidates and the value $k$ (or equivalently, the length-$k$ vector of manipulator weights).
 
 We provide a positive result of the following type: 
 \begin{quotation}
\noindent {\bf Main Result:} \emph{If there exists a manipulation strategy enabling $p$ to win by a large-enough margin, we efficiently find a successful manipulation strategy making $p$ win.}
\end{quotation}
  Take the unweighted case as an example: assume that we can provide, for some function $f(k,m)$, an $f(k,m)$-additive approximation to the maximum difference obtainable between $p$'s final score and the final score of the highest-scoring non-preferred candidate. Then, if there exists a $p$-winning strategy such that this difference is at least $f(k,m)$, we can be rest-assured that the algorithm will find a $p$-winning strategy.

 This, in turn, boils down to approximating an upper-bound to the score of the highest-ranked candidate who is not $p$. Earlier research of this flavor focused only on cases where the number of candidates is bounded: for $\Ra$-WCM, Brelsford et al.~\cite{DBLP:conf/aaai/BrelsfordFHSS08} provide an FPTAS to that upper bound (to be exact, they provide an FPTAS to the same exact value  we defined, and then use it to provide another FPTAS to  another value, which is their value-of-interest.\footnote{They are interested in the difference between the score of the preferred candidate and the highest-scoring non-preferred candidate when including the manipulator votes, minus the same difference when not including the manipulator votes. Notice that the  upper-bound we defined is the only  non-trivial value in this computation.}) For $\Ra$-UCM, if the number of candidates is bounded, the entire problem becomes easy and polynomial-time solvable~\cite[Proposition~1]{DBLP:journals/jacm/ConitzerSL07}. Compared to this line of work, we do not limit ourselves to bounded number of candidates.

 \subsection{Our Results and Contributions}
 Consider a general positional scoring rules $\Ra$, as is usually described by a vector $\vec{\alpha} = (\alpha_0,\ldots,\alpha_m)$ (see Section~\ref{sec:def} for full definitions). 
 Now let $T^*=\min_S\max_{c' \in C \setminus \{p\}}\mathrm{score}_{\Ra,E,S}(c')$ be the minimum possible score (ranging over all possible manipulation strategies $S$) of the highest scoring candidate who is not $p$, where $\mathrm{score}_{\Ra,E,S}(c')$ is the final score of a candidate $c'$ (w.r.t.\ voting rule $\Ra$, election $E$, and strategy $S$) and $C$ is the candidate set. 

 Our main technical contribution is a constructive proof to the following two theorems. 
 Let $\beta = d \sqrt{m\log m}$ for some constant $d$, and let $g(\vec{\alpha})=\max_{i=0,\ldots,m-\beta}\abs{\alpha_{i+\beta}-\alpha_{i}}$. In words, $g(\vec{\alpha})$ is the biggest difference between a score in $\vec{\alpha}$ and another score $\beta$ entries away from it.  
 
 \begin{theorem}\label{thr:general}
 	There exists a randomized Monte Carlo algorithm for $\Ra$-UCM which provides a $k\cdot g(\vec{\alpha})$-additive approximation to $T^*$ with an exponentially-small failure probability.
 \end{theorem}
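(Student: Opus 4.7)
The plan is to adapt the configuration-LP paradigm from multiprocessor scheduling: solve an exponentially-large LP via the ellipsoid method with a combinatorial separation oracle, and then round the fractional optimum to an integral manipulation strategy whose additive cost over the LP is at most $k\cdot g(\vec{\alpha})$.

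For the formulation, I would introduce, for each permutation $\pi$ (a ranking in which $p$ is always placed at the top position), a variable $x_\pi\geq 0$ representing the fractional number of manipulators that cast vote $\pi$. For a target score $T$, impose $\sum_\pi x_\pi=k$ together with $s(c)+\sum_\pi \alpha_{\pi(c)}\,x_\pi\leq T$ for every $c\in C\setminus\{p\}$, where $s(c)$ is the score of $c$ coming from the non-manipulators. Binary-searching on $T$, the smallest feasible $T$, which we call $T_{\mathrm{LP}}$, is a lower bound on $T^*$ and can be computed in polynomial time: although there are $(m-1)!$ primal variables, the dual has only $m-1$ multipliers (plus one for the cardinality constraint) and exponentially many constraints, and the separation oracle that finds the most violated constraint reduces to a single weighted bipartite matching between candidates and positions. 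Running ellipsoid against this oracle yields a fractional solution $x^*$ supported on polynomially many permutations along with the value $T_{\mathrm{LP}}\leq T^*$.

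For the rounding, I would view $x^*/k$ as a distribution over rankings and have each of the $k$ manipulators independently sample a ranking from it; then the final score of each non-preferred candidate $c$ is a sum of $k$ independent, bounded random variables with mean at most $T_{\mathrm{LP}}-s(c)$, so a Chernoff/Hoeffding-type tail bound combined with a union bound over the $m-1$ non-preferred candidates yields an exponentially small failure probability for any deviation we target. The hard part---and the place where the specific choice $\beta=d\sqrt{m\log m}$ enters---is tightening the per-variable range so that the target deviation becomes exactly $k\cdot g(\vec{\alpha})$: a black-box Hoeffding over the full range $[\alpha_0,\alpha_m]$ is too coarse. The key idea is to exploit the smoothness of the scoring vector captured by $g(\vec{\alpha})$, namely that any two positions within $\beta$ of one another differ in score by at most $g(\vec{\alpha})$, by preprocessing $x^*$ so that for every candidate $c$ the marginal $\sum_{\pi:\pi(c)=j}x^*_\pi$ is localized within a window of $\beta$ consecutive positions. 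Once localized, each manipulator's contribution to $c$'s score varies by at most $g(\vec{\alpha})$ rather than $\alpha_m-\alpha_0$, and the constant $d$ is chosen large enough so that the concentration bound (now with this reduced range) absorbs the union-bound penalty and delivers the claimed $k\cdot g(\vec{\alpha})$-additive guarantee with exponentially small failure probability. Maintaining the feasibility of the CLP under this localization---analogous to the block/matching arguments used in Shmoys--Tardos style scheduling rounding---is where the delicate combinatorics reside.
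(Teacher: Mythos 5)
Your proposal has a genuine gap at its foundation: the LP you formulate is not the paper's configuration LP, and it cannot support the claimed bound. Your variables $x_\pi$ range over permutations with $\sum_\pi x_\pi=k$ and per-candidate score constraints; by Birkhoff--von Neumann, projecting this onto the marginals $x_{i,j}=\sum_{\pi:\pi(c_i)=j}x_\pi$ gives exactly the paper's ``natural'' relaxed LP (doubly stochastic up to scaling by $k$, plus the linear score constraints), so the two have the same optimal value $T_{\mathrm{LP}}$. The paper proves this relaxation has an $\Omega(m)$ \emph{additive} integrality gap: with $k=1$, all $\sigma_i=0$, and Borda, the fractional optimum is $(m-1)/2$ (uniform over permutations) while every integral solution has value $m-1$. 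Since your analysis bounds the rounded score against $T_{\mathrm{LP}}$, no rounding of this LP can certify $T_{\mathrm{LP}}+k\cdot g(\vec{\alpha})=T_{\mathrm{LP}}+O(\sqrt{m\log m})$ in that instance. The paper's fix is to change the LP itself: a configuration is an entire multiset of $k$ scores assigned to one candidate, and only configurations with $\sum_j C_j\alpha_j\le T-\sigma_i$ are admitted as variables, which makes the $k=1$ instance infeasible below $T=m-1$ and kills the gap. Your ``localization'' step is where you would need to recover this, but it is asserted rather than constructed, and it is unclear it can be done at all: forcing each candidate's marginal into a width-$\beta$ window both distorts the LP value by an uncontrolled amount and conflicts with the global requirement that each position be used $k$ times across all $m$ candidates (they cannot all be localized near the low positions).

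The rounding and concentration argument also diverge from what is needed. Independent sampling of $k$ votes gives, per candidate, a sum of $k$ variables each ranging over $[\alpha_0,\alpha_{m-1}]$, so even a correct Hoeffding bound yields deviation $O\bigl((\alpha_{m-1}-\alpha_0)\sqrt{k\log m}\bigr)=O(m\sqrt{k\log m})$ for Borda, which exceeds the target $O(k\sqrt{m\log m})$ whenever $k=o(m)$ --- precisely the regime the paper cares about. The paper instead samples \emph{one configuration per candidate} (the constraint $\sum_C x_{i,C}\le 1$ makes the $x_{i,C}$ a distribution), so each candidate's score is automatically at most $T$ after sampling; the only violated constraints are the score-type multiplicities, which are repaired by globally sorting all $km$ awarded scores and reassigning the $\ell$-th smallest to type $\lfloor\ell/k\rfloor$. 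Concentration is applied not to candidate scores but to the histogram partial sums $G[j]$ (a sum of $m$ independent $[0,k]$-bounded terms), giving $G[j]\le (j+1)k+O(k\sqrt{m\log m})$ w.h.p., hence each score index shifts by at most $\beta$ and each candidate loses at most $k\cdot g(\vec{\alpha})$. Your separation oracle (an assignment/rearrangement problem over positions) is also not the one the configuration LP requires, which is a $k$-multiset knapsack per candidate. In short, the high-level paradigm matches the paper, but the specific LP, the rounding, and the concentration target are all different, and the first of these is a fatal flaw rather than a stylistic choice.
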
  

\begin{theorem}\label{thr:general2}
 	There exists a randomized Monte Carlo algorithm for $\Ra$-WCM which provides a $W\cdot g(\vec{\alpha})$-additive approximation to $T^*$ with an exponentially-small failure probability, where $W$ is the sum of voter weights.
 \end{theorem}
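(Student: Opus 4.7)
The plan is to parallel the construction of Theorem~\ref{thr:general} but carry the weights $w_j$ through the LP and the rounding. I would first formulate a configuration LP: for each manipulator $j$ and each complete ranking $\pi$ of the $m$ candidates, introduce a variable $x_{j,\pi}\in[0,1]$ with $\sum_\pi x_{j,\pi}=1$ for every $j$, and minimize a scalar $T$ subject to $\sum_{j,\pi}w_j\,x_{j,\pi}\,\alpha_{\pi(c')}\le T$ for every $c'\neq p$. This is exactly the unweighted CLP with each manipulator's contribution scaled by $w_j$. Although it has exponentially many variables, its dual has polynomially many, and a separation oracle for the dual amounts to, given dual multipliers $y_{c'}$, finding for each manipulator $j$ a ranking $\pi$ minimizing $w_j\sum_{c'}y_{c'}\alpha_{\pi(c')}$---a polynomial-time weighted assignment problem. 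Hence the ellipsoid method returns the LP optimum $L$, which lower-bounds $T^*$.

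Given the fractional optimum $x^*$, I would round the assignments while exploiting the sorted structure of $\vec{\alpha}$. Concretely, partition the $m$ ranking positions into blocks of width $\beta=\Theta(\sqrt{m\log m})$ and coarsen $\vec{\alpha}$ to a piecewise-constant vector that agrees with $\vec{\alpha}$ within additive $g(\vec{\alpha})$ per position; replacing $\vec{\alpha}$ by this coarsened vector perturbs the objective by at most $W\cdot g(\vec{\alpha})$ per candidate. The coarsened LP admits a rounding whose per-candidate weighted score concentrates tightly around the LP value: independently sample $\pi_j$ from $\{x^*_{j,\pi}\}_\pi$, obtaining for each $c'\neq p$ a sum $\sum_j w_j\,\alpha_{\pi_j(c')}$ of independent random variables whose expectation equals $L$. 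A Hoeffding-type argument tailored to the block-structured objective then shows that the deviation exceeds $W\cdot g(\vec{\alpha})$ with probability exponentially small in $m$; a union bound over the $m-1$ non-preferred candidates keeps the overall failure probability exponentially small.

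The main obstacle is that the deviation bound $W\cdot g(\vec{\alpha})$ cannot come from a generic weighted Hoeffding inequality, which would scale with $\sqrt{\sum_j w_j^2}\cdot(\alpha_{\max}-\alpha_{\min})$ and can be much worse when the weights are skewed or $\vec{\alpha}$ is highly nonlinear. Obtaining the tighter bound requires using the sorted/block structure of $\vec{\alpha}$ (so that every $\beta$-position shift costs only $g(\vec{\alpha})$) in tandem with either a dependent-rounding scheme or a coarsification that bounds the per-manipulator rounding error by $w_j\cdot g(\vec{\alpha})$; summing these over $j$ yields the claimed $W\cdot g(\vec{\alpha})$. Once this rounding analysis is in place, everything else---LP setup, ellipsoid, assignment-based separation oracle, and union bound---transfers directly from the unweighted proof of Theorem~\ref{thr:general}.
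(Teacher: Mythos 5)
There is a genuine gap, and you have actually put your finger on it yourself without closing it. Your LP indexes variables by (voter, ranking) and your rounding samples each manipulator's ranking independently; the score of a non-preferred candidate is then a sum over voters of independent terms, each ranging over all of $[w_j\alpha_0, w_j\alpha_m]$. As you note, Hoeffding on such a sum gives a deviation of order $(\alpha_m-\alpha_0)\sqrt{\sum_j w_j^2\log m}$, which for Borda is $\Theta(m\sqrt{\sum_j w_j^2\log m})$ --- far weaker than $W\cdot g(\vec{\alpha})$. The block-coarsening you suggest does not repair this: even with $\vec{\alpha}$ made piecewise constant on width-$\beta$ blocks, a single candidate can, under independent per-voter sampling, land in the top block for far more (weighted) voters than the LP fractionally assigns, and with $k=1$ the paper's own integrality-gap example already shows that some candidate deterministically deviates from the LP value by $\Theta(m)$. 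So the concentration step you defer to ``a dependent-rounding scheme or a coarsification'' is exactly the missing content of the proof, and neither suggestion as stated yields the claimed bound.

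The paper's construction avoids this problem by inverting the roles of candidates and voters in the configuration LP. A configuration is attached to a \emph{candidate} $c_i$ and is a length-$k$ sequence recording which score index each voter gives to $c_i$; crucially, the variable set $\mathcal{C}_i(T)$ contains only configurations with $\sum_\ell w_\ell\alpha_{C_\ell}\le T-\sigma_i$, so when each candidate independently samples a configuration from its LP marginal, the bound $T$ on its score holds \emph{by construction} --- no concentration on scores is ever needed. What can fail is only the multiplicity constraint (a voter handing the same score index to several candidates), and the repair is a per-voter sort-and-rerank step. The probabilistic argument is then applied not to candidate scores but to the histogram partial sums $G[j]=\abs{\{i : C^i_\ell\le j\}}$, a sum of $m$ independent Bernoulli variables with mean $j+1$, so classic Hoeffding gives a shift of at most $\beta=O(\sqrt{m\log m})$ positions per score index, hence a cost of $w_\ell\, g(\vec{\alpha})$ per voter and $W g(\vec{\alpha})$ per candidate. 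This also changes the separation oracle: the dual constraints are indexed by (candidate, configuration) pairs, and separating them requires a pseudo-polynomial $k$-sequence knapsack (find a length-$k$ sequence of score indices with weighted cost at most $T-\sigma_i$ and dual value exceeding $y_i$), not the permutation-minimization you describe. Without moving to per-candidate configurations whose feasibility encodes the score bound, the $W\cdot g(\vec{\alpha})$ guarantee does not follow.
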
  
 
 These theorems immediately pave the way to the following corollaries:
  \begin{corollary}\label{thr:main}
 	There exists a randomized Monte Carlo algorithm for Borda-UCM which provides an $O(k \sqrt{m \log m} )$-additive approximation to $T^*$ with an exponentially-small failure probability.
 \end{corollary}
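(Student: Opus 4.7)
The plan is to derive this corollary as a direct specialization of Theorem~\ref{thr:general} to the Borda scoring vector. Recall that Borda is defined (up to a shift) by the arithmetic progression $\vec{\alpha}=(m-1,m-2,\ldots,1,0)$, so consecutive entries differ by exactly $1$. Therefore, for every valid index $i$ we have $\abs{\alpha_{i+\beta}-\alpha_i}=\beta$, and hence $g(\vec{\alpha})=\max_i\abs{\alpha_{i+\beta}-\alpha_i}=\beta=d\sqrt{m\log m}$ by the definition of $g$ given just before Theorem~\ref{thr:general}.

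Plugging this into the guarantee of Theorem~\ref{thr:general} immediately yields an algorithm whose additive approximation to $T^*$ is $k\cdot g(\vec{\alpha})=O(k\sqrt{m\log m})$, inheriting the exponentially small failure probability. Thus the only step is to evaluate $g$ on Borda's vector; all the nontrivial algorithmic machinery --- the configuration LP, the separation oracle, and the randomized rounding --- is inherited from Theorem~\ref{thr:general} without modification. There is essentially no obstacle: the point of the corollary is merely to highlight that Borda's uniform-step scoring vector is precisely the case in which $g(\vec{\alpha})$ attains the clean $O(\sqrt{m\log m})$ form that motivated the choice of $\beta$ in the first place.
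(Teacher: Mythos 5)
Your proposal is correct and matches the paper's own (one-line) proof exactly: both derive the corollary by specializing Theorem~\ref{thr:general} to Borda and observing that, since consecutive Borda scores differ by $1$, $g(\vec{\alpha})=\beta=O(\sqrt{m\log m})$. Your additional remark explaining \emph{why} $g$ evaluates to $\beta$ for an arithmetic-progression scoring vector is a harmless (and slightly more explicit) elaboration of the same step.
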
  
 \begin{corollary}\label{thr:main2}
 	There exists a randomized Monte Carlo algorithm for Borda-WCM which provides an $O(W \sqrt{m \log m} )$-additive approximation to $T^*$ with an exponentially-small failure probability.
 \end{corollary}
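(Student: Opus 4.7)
The plan is to derive Corollary~\ref{thr:main2} by directly specializing Theorem~\ref{thr:general2} to the Borda scoring vector. Recall that Borda is the positional scoring rule described by $\vec{\alpha}=(\alpha_0,\alpha_1,\ldots,\alpha_{m-1})=(m-1,m-2,\ldots,1,0)$, so the entries of $\vec{\alpha}$ are equally spaced and decrease by exactly one at each step.

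The first step is to evaluate $g(\vec{\alpha})$. Because consecutive entries of the Borda vector differ by $1$, we have $|\alpha_{i+\beta}-\alpha_i|=\beta$ for every valid $i$, and hence $g(\vec{\alpha})=\beta$. Substituting the value $\beta=d\sqrt{m\log m}$ supplied by the theorem yields $g(\vec{\alpha})=O(\sqrt{m\log m})$.

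The second step is to invoke Theorem~\ref{thr:general2} on Borda-WCM. Since Borda is an instance of $\Ra$-WCM for the scoring vector $\vec{\alpha}$ above, the theorem provides a randomized Monte Carlo algorithm that outputs a manipulation strategy whose additive error against $T^*$ is at most $W\cdot g(\vec{\alpha})=O(W\sqrt{m\log m})$, with exponentially-small failure probability. This is exactly the guarantee claimed by the corollary.

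There is no real obstacle here beyond the substitution itself: all of the algorithmic machinery---the configuration LP, the ellipsoid-based solution with the separation oracle, the randomized rounding, and the concentration analysis---is packaged inside Theorem~\ref{thr:general2}. What the corollary expresses is merely that the Borda scoring vector is as ``smooth'' as a scoring vector can reasonably be: its gaps over a shift of $\beta$ are exactly $\beta$, so plugging into the general bound produces the smallest nontrivial additive error one would expect, namely $O(W\sqrt{m\log m})$.
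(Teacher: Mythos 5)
Your proposal is correct and matches the paper's own (one-line) proof: both simply specialize Theorem~\ref{thr:general2} to Borda and observe that consecutive entries of the Borda vector differ by exactly $1$, so $g(\vec{\alpha})=\beta=O(\sqrt{m\log m})$. The only cosmetic difference is that you write the Borda vector in descending order while the paper's convention is ascending ($\vec{\alpha}=(0,1,\ldots,m)$), which does not affect the value of $g(\vec{\alpha})$.
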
  
 
 Taking Borda-UCM as an example, if there exists a $p$-winning strategy enabling $p$ to win by a margin of $\Omega(k \sqrt{m \log m} )$  compared to the score of the highest-scoring non-preferred candidate, our method will find a $p$-winning strategy (albeit with a possibly smaller margin). Similar guarantees apply in the more general settings.
 
 Notice that for Borda, such approximations can also be seen as a $(1+o(1))$-multiplicative approximation on the  score of the highest-scoring non-preferred candidate, and thus is superior to an FPTAS; to see that, notice that for Borda-WCM the overall `voting mass' given by the manipulators is $\Omega(W m^2)$, and so the highest scoring candidate has score of at least $\Omega(W m)$. Therefore  $\tilde{O}(W\sqrt{m})$ \footnote{The $\tilde{O}$ notation suppresses poly-logarithmic factors.} is a lower order term.
 This is an advantage over the previous methods: 
 \begin{itemize}
 	
 	\item  Opposed to the heuristics in~\cite{DBLP:journals/ai/DaviesKNWX14}, we provide \emph{provable} guarantees.
 	\item Also opposed to the heuristics in~\cite{DBLP:journals/ai/DaviesKNWX14}, our algorithm generalizes to the weighted case. 
 	\item Compared to the \rev{} algorithm of~\cite{DBLP:journals/ai/ZuckermanPR09} for Borda-UCM, while adding only a single extra manipulator sounds like a minor operation, it is not; as mentioned, an extra manipulator implies the addition of $\Omega(m)$ points to the difference between $p$ and its highest-scoring competitor. 	
 	\item Consider the \rev{} algorithm of~\cite{DBLP:journals/ai/ZuckermanPR09},	
and assume that adding extra manipulators is not allowed. We will show that their method implies no better than an $\Omega(m)$-additive approximation to the score of the highest-scoring non-preferred candidate. Our approximation is thus
 	superior when $k$ is $o(\sqrt{m/\log m})$.
 	\item Compared to previous methods, are results are linear-programming-based, and not  greedy. Thus, they make a decision based on the entire input, as opposed to repeatedly making a decision based on a greedy estimate w.r.t.\ some subset of the problem.
 	 \end{itemize}
 	 
The following claim analyzes \rev{} according to our metric. It is proven in Section~\ref{sec:lower}.
\begin{claim}\label{thr:lower}
	For any $m$, when the addition of more than $k$ extra manipulators is not allowed, there are families of cases in which the optimal strategy enables $p$ to win  by a margin of at least $m/3$, but \rev{} fails to find a $p$-winning strategy.
\end{claim}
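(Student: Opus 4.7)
The plan is to exhibit, for every sufficiently large $m$, an explicit family of instances for which both the OPT margin of at least $m/3$ and the failure of \rev can be verified by direct calculation. Take $p$'s non-manipulator score to be $0$ and every non-preferred candidate's score to be a common value $s$; use an odd number $k = 2\ell+1$ of unit-weight manipulators. The construction succeeds provided
\[
\ell m + 2 \;\le\; s \;\le\; (2\ell+1)\tfrac{m}{2} - \tfrac{m}{3},
\]
a window whose width is $m/6 - 2$ and is nonempty once $m \ge 12$ (taking $\ell = 1$); very small $m$ is handled separately by padding the candidate set.

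For the OPT side, I would exhibit $k$ Borda ballots whose per-candidate Borda contributions (i.e., $k$ permutations of $\{0,\ldots,m-2\}$) sum coordinate-wise to the constant $\tfrac{k(m-2)}{2}$. Since all non-preferred candidates begin at the same score $s$, they all end at $s + \tfrac{k(m-2)}{2}$, while $p$ ends at $k(m-1)$; the OPT margin is therefore exactly $\tfrac{km}{2} - s$, which is at least $m/3$ by the upper bound on $s$. Producing the required balanced set of $k$ permutations is a short combinatorial sublemma---for example one can take $\pi_1(i) = i-1$, $\pi_2$ a cyclic shift, and define $\pi_3 = \tfrac{k(m-2)}{2} - \pi_1 - \pi_2$, then check that this is itself a permutation.

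For the \rev side, the key structural fact is that starting from an all-tied configuration, \rev falls into a clean two-ballot cycle. Because all non-preferred candidates begin tied at $s$, the first ballot (under any tie-breaking) assigns them additional scores $\{0,1,\ldots,m-2\}$ in some order, moving them to the multiset $\{s, s+1, \ldots, s+(m-2)\}$. On the second ballot the scores are untied, and the \rev rule gives the candidate whose first-round addition was $j$ the new addition $(m-2) - j$; the two-ballot additions therefore sum to exactly $m-2$ for every candidate, returning the configuration to an all-tied state at $s+(m-2)$. Iterating $\ell$ pairs of ballots lands us at an all-tied configuration at $s + \ell(m-2)$; the final $(2\ell+1)$th ballot then again produces a staircase, whose maximum entry is $s + (\ell+1)(m-2)$. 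By the lower bound on $s$ this is strictly greater than $p$'s final score $(2\ell+1)(m-1)$, so \rev outputs a strategy in which some non-preferred candidate strictly beats $p$.

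The main obstacle is the verification that the two-ballot cycle is robust to tie-breaking; this reduces to the one-line identity that any permutation $\sigma$ of $\{0,\ldots,m-2\}$ and its ``complement'' $(m-2)-\sigma$ are both permutations whose coordinate-wise sum is $m-2$, which is exactly what makes \rev's behavior cycle cleanly regardless of how the initial-round ties are broken. The balanced-sum permutations needed on the OPT side are the other subtle point; handling the remaining small values of $m$ is a secondary technical matter, settled by padding with additional dummy candidates that only widen the admissible window.
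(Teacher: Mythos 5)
Your construction is essentially the paper's: the paper also starts from an all-tied non-manipulator profile, observes that \rev{} pairs up its ballots (each even-indexed ballot is the coordinate-wise complement of the staircase left by the previous one, so consecutive pairs cancel back to a tie and the final odd ballot leaves a staircase of height $\lceil k/2\rceil(m-1)$ above the tie, in its indexing where the non-preferred candidates receive scores $0,\ldots,m-1$), and then compares this against a hand-built balanced strategy. The differences are in the optimal-strategy witness. The paper fixes $k=3$, normalizes $s=0$, and rather than demanding perfectly balanced permutations it uses an explicit ``snake'' allocation of the descending score sequence, giving every candidate $5m/3-r-1$ for some $r$ and hence a maximum of $5m/3-2$ against \rev{}'s $2(m-1)$ --- a gap of exactly $m/3$, fully verifiable with no existence lemma. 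Your version is more general (arbitrary odd $k$, an explicit window of admissible $s$) and, if completed, yields the larger margin $(m-2)/2$; but it leans on the sublemma that $k$ permutations of $\{0,\ldots,m-2\}$ can sum coordinate-wise to the constant $k(m-2)/2$, and the recipe you sketch does not work as stated: with $N=m-1$ non-preferred candidates, $\pi_1(i)=i$, and $\pi_2$ a cyclic shift by one, the candidate $\pi_3=\tfrac{3(N-1)}{2}-\pi_1-\pi_2$ already fails to be a permutation at $N=5$ (it takes the values $5,3,1,-1,2$). The shift must be $(N-1)/2$, in which case $\pi_3$ enumerates the even values of $\{0,\ldots,N-1\}$ on the first half of the indices and the odd values on the second half; moreover $N$ must be odd for the target constant to be an integer, and even $N$ has to be handled by settling for balance within one point (costing an additive $1$ in your window, which is harmless). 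Note also that your window genuinely requires the balanced construction: plugging the paper's snake bound into your inequalities empties the window, so the two halves of your argument are coupled more tightly than the write-up suggests. The \rev{} half, including the tie-breaking robustness via the complement identity, is exactly right and matches the paper.
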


 Our techniques are novel: they employ the use of \emph{configuration linear programs} (C-LP), a method that is also used in the scheduling literature, namely for two well-studied problems, the problem of \emph{scheduling on unrelated machines}~\cite{DBLP:journals/siamcomp/Svensson12}, and the so-called \emph{Santa Claus problem}~\cite{DBLP:conf/stoc/BansalS06}. See Section~\ref{subsec:related} for a discussion of these problems. It is important to note that the solutions to the two above problems and to ours all differ from one another with respect to how the algorithm proceeds once the C-LP result is computed.

 C-LPs are used 
  for the generation of an initial, invalid strategy, which is later modified to become valid. They are unique in the sense that they are linear programs that have an exponential number of variables, an issue which we solve by referring to the LP dual and using the ellipsoid method with a polynomially-computable separation oracle~\cite{khachiyan1980polynomial,DBLP:journals/combinatorica/GrotschelLS81}. We have also implemented our algorithm: as a result of not finding a library which enables solving an LP this way, we simulated this by an iterative use of a general LP-solving library, each time adding a violated constraint based on running the separation oracle externally.


%
%
\subsection{Related Work}\label{subsec:related}
\paragraph{Borda.} The Borda voting mechanism was introduced by Jean-Charles de Borda in 1770. It is used, sometimes with some modifications, by parliaments in countries such as Slovenia, and competitions such as the Eurovision song contest, selecting the MVP in  major league baseball, Robocup robot soccer competitions and others.
The Borda voting mechanism is described as follows:  every agent ranks the candidates from $1$ to $m$, and awards the candidate ranked $i$-th a score of $m-i$. Notice that this makes the scores given by each single voter a permutation of $0,\ldots,m-1$. Finally, the winning candidate is the one with the highest aggregate score. 

\paragraph{Easiness Results.} The computational complexity of  coalitional manipulation problems was studied extensively. For general scoring rules $\Ra$, most earlier work considered the case where the number of candidates is bounded: Conitzer at al.~\cite{DBLP:journals/jacm/ConitzerSL07} show that when $m$ is bounded, $\Ra$-UCM is solvable in polynomial time.

Even when $m$ is unbounded, Plurality-UCM and Veto-UCM are still easy using the greedy algorithm of Zuckerman et al.~\cite{DBLP:journals/ai/ZuckermanPR09}. This also holds for $t$-approval-UCM, which generalizes both~\cite{lin2012solving}.

\paragraph{$\NP$-Hardness Results.} In the weighted case, the situation is different: for all positional scoring rules $\Ra$, except plurality-like rules, $\Ra$-WCM is $\NP$-hard when $m\geq 3$~\cite{DBLP:journals/jacm/ConitzerSL07,DBLP:journals/jcss/HemaspaandraH07,DBLP:journals/jair/ProcacciaR07}. In particular, this holds for Borda-WCM.
	
However, the computational hardness of Borda-UCM still remained open for quite some time, until finally shown to be $\NP$-hard as well~\cite{DBLP:conf/aaai/DaviesKNW11,DBLP:conf/ijcai/BetzlerNW11} in 2011, even for the case of $n=3$ and adding $2$ manipulators.

\paragraph{Approximating the number of manipulators.} Zuckerman et al.~\cite{DBLP:journals/ai/ZuckermanPR09} present a greedy algorithm later referred to as \rev{}\footnote{This name was given in~\cite{DBLP:conf/aaai/DaviesKNW11}.}. \rev{} works as follows: after the non-manipulators had finished voting, we go over the manipulators one by one, and each manipulator will rank the candidates (besides $p$) by the \emph{reversed} order of their aggregated score so far (candidate with the highest score so far gets the lowest ranking). As mentioned, for Borda-UCM, \rev{} can be seen as an additive $+1$ approximation for the objective of finding the minimum number of manipulators needed. 

For Borda-WCM, their approximation has the following flavor. Let $S$ be the set of the $k$ given weighted manipulators. If there exists a $p$-winning strategy using $S$, they will find a $p$-winning strategy using additional manipulators, if the sum of weights of the additional manipulators equals $\max_{\ell \in S} w_\ell$, where $w_\ell$ is the weight of manipulator $\ell \in S$.

\paragraph{Approximating the Maximum Score of a Non-Preferred Candidate.}
Returning to earlier results, when $m$ is bounded, Brelsford et al.~\cite[Lemma 3]{DBLP:conf/aaai/BrelsfordFHSS08} provide an FPTAS with respect to the maximum score of a non-preferred candidate. As mentioned, this paves the way for an FPTAS on their value-of-interest.

\paragraph{Heuristics for Borda.}
 Davies et al.~\cite{DBLP:journals/ai/DaviesKNWX14} present two additional heuristics: iteratively, assign the largest un-allocated score to the candidate with the largest gap (\textsc{Largest Fit}), or to the candidate with the largest ratio of gap divided by the number of scores yet-to-be-allocated to this candidate (\textsc{Average Fit}). To the best of our knowledge, these algorithms do not have a counterpart for the weighted case.

\paragraph{Configuration Linear Programs.}
As discussed, configuration linear programs were also used in scheduling literature, for example for the following two problems which were extensively studied before:
\begin{itemize}
	\item In the so-called \emph{Santa Claus problem}~\cite{DBLP:conf/stoc/BansalS06}, Santa Claus has $t$ presents that he wishes to distribute between $m$ kids, and $p_{i,j}$ is the value that kid $i$ has to present $j$. The goal is to \emph{maximize} the happiness of the least happy kid: $\min_{i}\sum_{j\in S_i} p_{i,j}$, where $S_i$ is the presents allocated to kid $i$.
	\item  In the problem of \emph{scheduling on unrelated machines}~\cite{DBLP:journals/siamcomp/Svensson12}. We need to assign $t$ jobs between $m$ machines, and $p_{i,j}$ is the time required for machine $i$ to execute job $j$. The goal is to \emph{minimize} the makespan $\max_{i}\sum_{j\in S_i} p_{i,j}$, where $S_i$ is the jobs assigned to machine $i$.
\end{itemize}
Both papers researched a natural and well-researched `restricted assignment' variant of the two problems where $p_{i,j}\in \{p_j, 0\}$.  In~\cite{DBLP:conf/stoc/BansalS06}, they obtained an $O(\log \log m / \log\log\log m)$-multiplicative approximation to the first problem and in~\cite{DBLP:journals/siamcomp/Svensson12}, they obtained a $(33/17+\epsilon)$-multiplicative approximation to the second.

\section{Preliminaries}
\subsection{Problem Definition}\label{sec:def}
\paragraph{Candidate Set.} With a slight change of notation, let $C=\{c_0,c_1,\ldots,c_m\}$ be a candidate set consisting of  the preferred candidate $p=c_0$ and the other $m$ candidates $c_1,\ldots,c_m$. Note that we changed the notation so that the overall number of candidates is $m+1$; this will help streamline the writing.

\paragraph{Election.} An election $E=(C,V)$ is defined by a candidate set $C$ and a set of voters $V$ where each voter submits a ranking of the candidates according to its preference. Formally, we define $V=\{\succ_1,\ldots,\succ_n\}$, where each $\succ_\ell \in V$ is a total order of the candidates. For example, $c_1 \succ_\ell p \succ_\ell c_2$ is one such possible order if $C=\{p,c_1,c_2\}$. Then, some \emph{decision rule} $\mathcal{R}$ is applied in order to decide on the winner(s); formally $\mathcal{R}(E) \subseteq C$ is the set of winners of the elections. In the specific case of a \emph{positional scoring rule} $\Ra$, the rule is described by a vector  $\vec{\alpha} = (\alpha_0,\alpha_1,\ldots,\alpha_m)$ for which $\alpha_0\leq \alpha_1 \leq \cdots \leq\alpha_m$, and $\alpha_m$ is polynomial in $m$, used as follows: each voter awards $\alpha_i$ to the candidate ranked $(m-i)$-th\footnote{Usually $\vec{\alpha}$ is defined in a descending manner, such that $\alpha_0\geq \alpha_1 \geq \cdots \geq\alpha_m$, and $\alpha_i$ is awarded to the candidate ranked $i$-th. Our choice helps streamline the presentation.}. Finally, the winning candidate is the one with the highest aggregated score. In the specific case of Borda scoring rule, we have that $\vec{\alpha}=(0,1,\ldots,m-1,m)$.

\paragraph{WCM and UCM.} In the $\Ra$-(constructive) weighted coalitional manipulation ($\Ra$-WCM) problem, we are given as input:
\begin{itemize}
	\item  A score profile vector $(\sigma_0, \sigma_1,\ldots,\sigma_m)$ representing the aggregated scores given so far to each candidate in $C$ by the original voters in an election $E$ under the rule $\Ra$. Notice that $(\sigma_0, \sigma_1,\ldots,\sigma_m)$ eliminates the need for obtaining $E$ as input: as we have no control on the truthful voters $V$,  $(\sigma_0, \sigma_1,\ldots,\sigma_m)$ is thus a sufficient representation for the outcome of non-manipulator votes.
	\item  A vector $\vec{w}=(w_1,\ldots,w_k)$ of positive integers representing the weights of $k$ manipulators who will be added to the election. The weights have the following meaning: each manipulator $\ell=1,\ldots,k$ is replaced by $w_\ell$ identical but unweighted copies of himself. 
\end{itemize}

It then should be determined if when adding the $k$ manipulators then either (a) no strategy under $\Ra$  exists in which $p$ wins, or that (b) there exists a voting strategy under $\Ra$ such that $p$ can win. In this case, the algorithm should find it. $\Ra$-(constructive) unweighted coalitional manipulation ($\Ra$-UCM) is the specific case where $\vec{w}$ is the all-ones vector and therefore can be replaced in the input by the integer $k$. 


\paragraph{Manipulation Matrices.} Note that in case (b), the output is a voting strategy $S$ which  can be represented as a $k\times (m+1)$ matrix in which the entry $S_{\ell,i}$ describes the score given by manipulator $\ell$ to candidate $c_i$, and where each row of $S$ is a permutation of $\vec{\alpha}$. Such a representation is also called a \emph{manipulation matrix}. 
We can relax the 
requirement that each row of $S$ is a permutation, and replace it by the requirement that each \emph{score-type} $\alpha_j$, is repeated exactly $k$ times in $S$. Such a matrix is called a \emph{relaxed manipulation matrix}. We can perform this relaxation as Davies et al.~\cite[Theorem~7]{DBLP:journals/ai/DaviesKNWX14} show that each relaxed manipulation matrix can be rearranged to become a valid manipulation matrix while preserving each candidate's final score. 

\paragraph{High Probability.} Throughout the paper, when we use the term `with high probability', we mean an arbitrarily-chosen polynomially-small failure probability, i.e., success probability of the form $1-m^{-d}$ where $d \geq 1$ is a constant that can be chosen without affecting the asymptotic running time. `Failure' refers to the event that the algorithm does not provide the desired approximation guarantee.

In this paper we will use various forms of the Hoeffding inequalities, which are variants of the Chernoff inequalities:
\paragraph{Generalized Hoeffding inequality~\cite[Theorem~2]{hoeffding1963probability}.}
Let $X_1,\ldots,X_m$ be independent random variables where each $X_i$ is bounded by the interval $[a_i,b_i]$ respectively. Let $X=\sum_{i=1}^m X_i$ and $\bar{X}=X/m$. Then
\begin{equation*}
\Pr[\bar{X}-\mathbb{E}[\bar{X}] \geq t ] \leq \exp\left(-\frac{2 n^2t^2}{\sum_{i=1}^m (b_i - a_i)^2}\right)\ .
\end{equation*}
By redefining the above inequality in terms of the sum $X$ (instead of the mean $\bar{X}$) and defining $\lambda=tm$ we derive the following equivalent formulation:
\begin{equation*}
\Pr[X-\mathbb{E}[X] \geq \lambda ] \leq \exp\left(-\frac{2 \lambda^2}{\sum_{i=1}^m (b_i - a_i)^2}\right)\ .
\end{equation*}
Specifically, when $X_i \in \{0,1\}$, we obtain the following `classic' Hoeffding inequality, which is equivalent to~\cite[Theorem~1]{hoeffding1963probability}:
\begin{equation*}
\Pr[X-\mathbb{E}[X] \geq \lambda ] \leq \exp\left(-\frac{2 \lambda^2}{m}\right)\ .
\end{equation*}

\subsection{Reduction to a Pure Min-Max Problem}
Since we know what will be the final score of $p$ (non-manipulator votes are known and each manipulator will give $p$ the maximum score possible), we can effectively discard $p$ and treat the problem as a minimization problem on the final scores of $c_1,\ldots,c_m$ only. In other words,  we focus on finding $\min_S\max_{c' \in C \setminus \{p\}}s(c')=\min_S\max_{i=1,\ldots,m}s(c_i)$, where $s(c')$ is $c'$'s final score. Thus the output $S$ is actually a $k \times m$ relaxed manipulation matrix.

Another thing to note is that we can not assume anything about the values in the initial score profile $(\sigma_0, \sigma_1,\ldots,\sigma_m)$; this follows from~\cite[Lemma~1]{DBLP:journals/ai/DaviesKNWX14}, where it is shown that in the context of Borda, for any given non-negative integer vector $(\sigma_0,\sigma_1,\ldots,\sigma_m)$, we can define a set of non-manipulators (along with their preferences) and an additional candidate $c_{m+1}$ that will induce an initial score profile $(T'+\sigma_1,\ldots,T'+\sigma_m,y)$ for some values $T'$ and $y < T'$. Since such an additive translation and the addition of a candidate that will be awarded less than any other candidate have no influence on the winner (and on the difference between each two candidates' final scores), and since our results are concerned with an \emph{additive} approximation, we should assume no prior limitation on the nature of values in $(\sigma_0, \sigma_1,\ldots,\sigma_m)$.

\section{Lower Bound for REVERSE}\label{sec:lower}
We start by showing that there are cases in which the \rev{} algorithm for Borda gives only an $\Omega(m)$-additive approximation to minimum final score of the highest-scoring non-preferred candidate.

\begin{proof}[Proof of Claim~\ref{thr:lower}]
	We provide an infinite family of cases where the claim holds. 
	
	Let $k=3$ and let $m=3t$ for some integer $t$. Consider the case where after the non-manipulators voted, all candidates (but $p$) have the same score $\sigma_i=s$ for all $i$. Effectively this can be normalized to $(\sigma_1,\ldots,\sigma_m)=\vec{0}$.
	
	By the  \rev{} algorithm, the first manipulator can award $c_1,\ldots,c_m$ with $0,\ldots,m-1$ respectively, after which the second manipulator will be obliged to award $c_1,\ldots,c_m$ with $m-1,\ldots,0$ respectively. Repeat this process with the rest of the manipulators, until the final one. It can be verified that $c_m$ will end up with the maximal score of $\lceil k/2 \rceil (m-1)=2(m-1)$.
	
	Conversely, as an upper bound for an optimal solution, consider the following strategy: place all scores to be given in a descending sequence, that is the sequence $\langle m-1,m-1,m-1, m-2,m-2,m-2,\ldots,0,0,0\rangle$. Give the first $m$ scores in the sequence to $c_1,\ldots,c_m$ respectively, the next $m$ to $c_m,\ldots,c_1$ respectively, and the last $m$ to $c_1,\ldots,c_m$ respectively. Since every score-type has $3$ copies, we have just described a relaxed manipulation matrix and therefore by Davies et al.~\cite[Theorem~7]{DBLP:journals/ai/DaviesKNWX14} it can be rearranged to become a valid manipulation matrix without changing the final score of each candidate. Now notice that the score given to any candidate is of the form $(m-r) + (m/3+r-1) + (m/3 -r)=5m/3-r-1$ for some $r \in \{1,\ldots,m/3\}$. As this is at most $5m/3-2$ (when $r=1$), the difference is thus $m/3=\Omega(m)$.
	%
\end{proof}

\section{Linear Programming for UCM}\label{sec:lp-U}
We will begin by providing a ``natural'' way to formulate the min-max version of the problem as an Integer Program (IP). As solving IPs is $\NP$-hard, we will relax it to the equivalent Linear Program (LP). However, such a natural LP will not be useful in our setting, and we will thus introduce a totally different LP formulation, called \emph{Configuration Linear Programming} (C-LP). The number of variables in the C-LP is exponential in the size of the input. Nevertheless, we show that our C-LP can be solved in polynomial time.

Let $[m]=\{1,\ldots,m\}$ and $[m]_0=\{0,\ldots,m-1\}$. We define the variables $x_{i,j}$ for $(i,j) \in [m]\times[m]_0$, and the variable $T$, with the intent that $x_{i,j}$ will equal the number of times candidate $c_i$ received a score of $\alpha_j$, and $T$ will serve as the upper-bound on each candidate final aggregate score. The IP can then be stated as follows:
 \begin{equation*}
 	\min_{\vec{x}} T
 \end{equation*} 
 subject to:
\begin{align}
 &\sum_{i=1}^m x_{i,j}=k &\forall j\in[m]_0\label{eq:orig:1}\ ,\\
 &\sum_{j=0}^{m-1}x_{i,j}=k &\forall i\in[m]\label{eq:orig:2}\ ,\\
 &\sum_{j=0}^{m-1}\alpha_j x_{i,j}\leq T-\sigma_i  &\forall i\in[m]\label{eq:orig:3}\ ,\\
&x_{i,j} \in \{0,\ldots,k\} & \forall i\in[m],j\in[m]_0\ ,
 \end{align}

 where (\ref{eq:orig:1}) guarantees that every score was awarded $k$ times, (\ref{eq:orig:2}) guarantees that every candidate was given $k$ scores, and (\ref{eq:orig:3}) guarantees that  every candidate gets at most $T$ points.

It should be noted that when treating the problem as a min-max problem, we need to take $T$ as a variable that we wish to minimize (this is done by the objective function). However, if we consider the original definition in which our aim is to make the preferred candidate $p$ win, $T$ can be set to $\sigma_0 + k\alpha_{m}$ (the final score of the preferred candidate), and the IP will not have an objective function.

\subsection{Integrality Gap of the Natural LP}
  While we can relax this IP into an LP by replacing the set in the last constraint to be the continuous interval $[0,k]$, it will not be as helpful. We will shortly show why; however note that as this sub-section relates to the deficiencies of the original LP formulation, it can be safely skipped and is not needed for the full understanding of our algorithms.

Consider a ``pure'' LP rounding algorithm, applied w.l.o.g.\ to a minimization problem. Such an algorithm works as follows: given an instance of a problem, it solves its associated relaxed LP (the problem's natural IP where integrality constraints are replaced with their continuous counterparts) and then rounds the resulting solution in some way or the other such that a valid (non-necessarily optimal) solution to the original IP is obtained. The approximation analysis of such algorithms is based on reasoning about how worse is the objective value of the rounded solution compared to the fractional one. In other words, what is the increase---or ``damage done''---to the optimum objective incurred by the rounding process. 
 Since the fractional optimum of a relaxed LP is a lower bound to the integral optimum, i.e., the optimum of the original problem, the same factor also upper-bounds the difference between the objective value of the rounded solution and the one of the integral optimum. Thus this process derives an approximation guarantee. We show that in our case, the increase can be $\Omega(m)$, by showing that there are cases in which the difference between the integral and fractional optimum objective values  is $\Omega(m)$, and thus an algorithm solely based on the rounding procedure cannot hope for an $o(m)$ additive approximation. This kind of reasoning is known as an \emph{integrality gap}, and is demonstrated by the following:
  \begin{lemma}
  	For Borda-UCM, an algorithm based solely on rounding the relaxed natural LP cannot obtain $o(m)$ additive approximation.
  \end{lemma}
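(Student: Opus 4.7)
The plan is to exhibit a family of Borda-UCM instances where the relaxed natural LP has optimum that is $\Omega(m)$ smaller than the optimum of the IP. Since any pure rounding-based algorithm produces a feasible integer solution (and hence has value at least the IP optimum), while its approximation analysis can use only the LP optimum as a lower bound, this forces such an algorithm into an $\Omega(m)$ additive gap on the exhibited family.

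The construction I would use is deliberately the simplest: take $k=1$, Borda scoring $\vec{\alpha}=(0,1,\ldots,m)$, and set all non-manipulator scores $\sigma_i=0$. By the reduction in Section~\ref{sec:def} / Section~2.2, we may discard $p$ and treat the problem as asking the single manipulator to distribute the values $0,1,\ldots,m-1$ among $c_1,\ldots,c_m$ while minimizing the maximum score received.

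For the LP side I would point to the uniform assignment $x_{i,j}=1/m$ for all $(i,j)\in[m]\times[m]_0$: this is clearly feasible since each row and each column sums to $1$, and each candidate's fractional score equals $\sum_{j=0}^{m-1} j/m = (m-1)/2$, so the fractional optimum is at most $(m-1)/2$. For the IP side, with $k=1$ the constraints (\ref{eq:orig:1}) and (\ref{eq:orig:2}) together with $x_{i,j}\in\{0,1\}$ force $x$ to be a permutation matrix, and hence \emph{some} candidate must be assigned the top slot $\alpha_{m-1}=m-1$; so the integer optimum is at least $m-1$. The integrality gap is therefore $(m-1)/2=\Omega(m)$, which proves the lemma.

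The main ``obstacle'' here is conceptual rather than technical: one has to choose parameters for which fractional averaging can spread the heavy scores evenly across candidates while any integer assignment is forced to concentrate them on a single candidate. Taking $k$ as small as possible together with a uniform $\sigma$ accomplishes both simultaneously and reduces the argument to summing an arithmetic progression; no more delicate construction is needed.
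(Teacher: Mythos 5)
Your proposal is correct and is essentially identical to the paper's own proof: the same instance ($k=1$, all $\sigma_i$ equal), the same uniform fractional solution $x_{i,j}=1/m$ achieving $(m-1)/2$, and the same observation that any integral solution forces some candidate to receive the score $m-1$, yielding an $\Omega(m)$ additive integrality gap.
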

\begin{proof}
	We show a lower bound on the approximation ratio in the form of an additive integrality gap. In other words, we show an infinite family of instances where the integral solution to the LP (and thus, to the original problem) gives $\Omega(m)$ worse objective value when compared to  the fractional solution.
	
	Consider the simple case of $m$ candidates, all having equal initial score (w.l.o.g.\ $\sigma_i=0$ for all $i=1,\ldots,m$) and a single manipulator. When solving the problem, one candidate will be awarded $m-1$ and thus will have final score of $m-1$. However, in the fractional solution, the optimum is obtained by splitting each score equally, that is, setting $x_{i,j}=1/m$ for every $i \in [m]$ and $j \in [m]_0$. Now every candidate obtained a final score of $1/m \cdot \sum_{j=0}^{m-1}j=(m-1)/2$. Therefore notice that the gap between the objective of the integral and fractional solutions is $(m-1)/2=\Omega(m)$. 
\end{proof}

\subsection{Introducing Configuration LPs}
In order to work around this we will have to resort to a totally different approach, in which variables no longer represent score types, and instead represent the set of scores (\emph{configuration}) that can be awarded to a candidate.
  
Formally, a \emph{configuration}
$C$ for some candidate $c_i$ is a vector of dimension $m$ in which $C_j$ represents a number of scores of type $\alpha_j$ that $i$ has received, and for which $\sum_{j=0}^{m-1} C_j=k$, that is, the overall number of scores awarded is $k$. For a candidate $c_i$ and a bound $T$, let $\mathcal{C}_i(T)$ be the set of  configurations that do not cause the candidate overall score to surpass $T$, i.e., the set of configurations $C$ for which  $\sum_{j=0}^{m-1} C_{j} \alpha_j\leq T-\sigma_i$.
  
    We formulate the configuration LP as follows:
    \begin{align}
    &\sum_{C \in \mathcal{C}_i(T)}x_{i,C}\leq 1 &\forall i\in[m]\ ,\label{eq:U1}\\
    &\sum_{\substack{i,C\\C \in \mathcal{C}_i(T)}} C_{j} x_{i,C}\geq k &\forall j\in[m]_0\ ,\label{eq:U2}\\
    &x_{i,C} \geq 0 & \forall i\in [m], C \in \mathcal{C}_i(T)\ .
    \end{align}     
  where we wish that the $x_{i,C}$'s would serve as indicator variables indicating whether or not $c_i$ was awarded with configuration $C$, (\ref{eq:U1}) guarantees that every candidate was given at most $1$ configuration and (\ref{eq:U2}) guarantees that every score was awarded at least $k$ times. The choice of inequalities over equalities will be explained soon.

  \begin{example}
  	Consider the case where $k=2$, $m=5$, $\vec{\alpha}=(0,1,2,3,4)$ (i.e., Borda) and  $(\sigma_1,\ldots,\sigma_5)=(5,6,6,6,7)$. We are omitting the non-manipulator votes that provided $\vec{\sigma}$, however recall that there is a possible non-manipulator voting yielding any $\vec{\sigma}$ up to an additive factor and an addition of a candidate.
 Now assume $T=10$ (this is indeed the optimum). Let us focus on the last candidate $c_5$. $\mathcal{C}_5(T)$ should therefore contain all configurations which award $c_5$ at most $T-\sigma_5=3$ points. Those configurations are $(2,0,0,0,0)$ ($0$ points), $(1,1,0,0,0)$ ($1$ point), $(0,2,0,0,0)$, $(1,0,1,0,0)$ ($2$ points), and $(1,0,0,1,0)$ ($3$ points).
  	
  	When solving the $C-LP$, only two of her configurations will get non-zero value: $x_{5,(1,0,0,1,0)} \approxeq 0.7$ and $x_{5,(0,1,1,0,0)} \approxeq 0.3$. We omit the variables corresponding to the rest of the candidates.
  \end{example}
  
  
After solving the LP, we will execute a rounding procedure that will transform the fractional LP solution into a valid solution for the original problem. This procedure can increase the score of some of the candidates, and hence we wish to start with the smallest possible $T$ (so that even after the increase the final score will hopefully be bounded by $\sigma_0 + k\alpha_{m}$).
  
  To find the smallest possible $T$, we perform a one-sided binary search on the value of $T$. For this purpose, for each possible value of $T$ that we come across during the binary search, we redefine the LP and then solve the new LP from scratch, and see if it has a valid solution.    
    The reason we do not add $T$ as a variable in an objective function (instead of the binary search) is that the number of summands in Equations~(\ref{eq:U1},\ref{eq:U2}) depends on $T$.
  
%

 
 This formulation has the obvious drawback that the number of variables is exponential in $k$. However, following the approach of~\cite{DBLP:conf/stoc/BansalS06}, if we find a polynomially-computable separation oracle we can solve the LP by referring to the LP dual and using
 the ellipsoid method. Such an oracle will require a solution to the following seemingly unrelated problem as a subroutine: a variant of the classic Knapsack problem.
 \subsection{$k$-Multiset Knapsack} Let $\{1,\ldots,m\}$ be a set of distinct items, where each item has an associated value $v_j$ and a weight $w_j$. We also obtain a weight upper-bound $W$ and a value lower-bound $V$. As opposed to ordinary knapsack, we also obtain an integer $k$. We are required to find a multiset $S$ of exactly $k$ items (i.e., we can repeat items from the item-set), such that $S$'s overall weight is at most $W$ and $S$'s overall value is greater than $V$ (or to return that no such multiset exists).
 
 \begin{lemma}\label{lem:multiknap}
 	The $k$-multiset knapsack can be solved in time polynomial in $k$, $m$, and $W$ (which is pseudo-polynomial due to the dependence on $W$).
 \end{lemma}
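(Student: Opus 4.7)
The plan is to use a straightforward dynamic program whose state tracks the number of items picked so far together with the total weight used. For $s \in \{0,1,\ldots,k\}$ and $w \in \{0,1,\ldots,W\}$, let $f(s,w)$ denote the maximum total value attainable by a multiset of exactly $s$ items (chosen with repetition from $\{1,\ldots,m\}$) whose total weight is at most $w$, with the convention $f(s,w) = -\infty$ if no such multiset exists.

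The base case is $f(0,w) = 0$ for every $w \in \{0,\ldots,W\}$. For $s \geq 1$, I would use the recurrence
\[
f(s,w) \;=\; \max_{j \,:\, w_j \leq w} \bigl( f(s-1,\, w - w_j) + v_j \bigr),
\]
where the maximum over an empty set is taken to be $-\infty$. Correctness follows by ``peeling off'' any single item from an optimal multiset of size $s$: since repetition is permitted, whichever index $j$ we remove may freely reappear in the remaining multiset of size $s-1$, so the recurrence enumerates every feasible multiset without missing or invalidating any configuration. The algorithm then outputs \emph{yes} iff $f(k,W) > V$, and a witnessing multiset is recovered by standard backtracking (storing, at each filled cell, the maximizing index $j$).

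The table has $(k{+}1)(W{+}1)$ entries, each filled in $O(m)$ time, for an overall running time of $O(k\,m\,W)$, which is polynomial in $k$, $m$, and $W$ and pseudo-polynomial in the bit-length of $W$, as claimed. The only subtlety worth flagging is that, in contrast to $0/1$ knapsack, one must not index by subsets of items: the clean peel-off argument only works because the state is the multiset's \emph{size}, which automatically avoids the bookkeeping required to prevent re-selection. I expect no other obstacle — the lemma is essentially a standard DP exercise, and the value lower bound $V$ does not enter the state space at all (it is only compared against the final table entry), which is why the running time is independent of $V$.
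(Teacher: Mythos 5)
Your proof is correct and is essentially identical to the paper's: the same $(k{+}1)\times(W{+}1)$ table indexed by multiset size and weight budget, the same peel-off recurrence with $-\infty$ for infeasible states, the same final comparison $f(k,W)>V$ with backtracking, and the same $O(kmW)$ running time. No substantive differences to report.
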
 
 \begin{proof}
 	We fill out a table $Q[w,\ell]$, for $w=0,\ldots,W$ and $\ell=0,\ldots,k$, in which $Q[w,\ell]$ is the highest value obtainable with a size-$\ell$ multiset of items of aggregate-weight at most $w$. Notice that $Q$ can be filled using the following recursion: 
 	\begin{equation}
 	Q[w,\ell]=
 	\begin{cases}
 	0 & \text{if $\ell = 0$, }\\
 	\max_j v_j  +  Q'(w-w_j,\ell-1)  & \text{otherwise,}
 	\end{cases} 
 	\end{equation} 
 	where $Q'(w,\ell)=Q[w,\ell]$ if it is defined, i.e., $w\geq 0$ and $\ell \geq 0$, and otherwise is $-\infty$.
 	
 	Therefore $Q$ can be filled-out using dynamic programming. Finally, the entry $Q[W,k]$ contains the highest value obtainable with overall weight at most $W$. Therefore, if $Q[W,k] > V$, we have found a required multiset; otherwise such does not exists. The resulting multiset itself can be recovered using backtracking on the table $Q$. Noticed that the amount of work done is $O(Wkm)$. 
 \end{proof}

\subsection{Solving the UCM C-LP}
 We return to our problem. The choice of inequalities over equalities is motivated by our use of the LP dual in Theorem~\ref{thr:poly-U}. However, they have the same effect as equalities, as shown by the following lemma:
  \begin{lemma}
  	In a solution to the above C-LP, Equations~(\ref{eq:U1},\ref{eq:U2}) will actually be equalities.
  \end{lemma}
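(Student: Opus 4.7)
The plan is to exploit a simple double-counting identity that ties the two families of constraints together. I would first sum Equation~(\ref{eq:U1}) over all $m$ candidates to obtain
\[
\sum_{i=1}^m \sum_{C\in \mathcal{C}_i(T)} x_{i,C} \;\leq\; m,
\]
and then multiply by $k$. Independently, I would sum Equation~(\ref{eq:U2}) over all $m$ score types $j\in[m]_0$ to get
\[
\sum_{j=0}^{m-1} \sum_{i,C} C_j\, x_{i,C} \;\geq\; km.
\]

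The key observation is that every configuration $C\in\mathcal{C}_i(T)$ satisfies $\sum_{j=0}^{m-1} C_j = k$ by the very definition of a configuration (exactly $k$ scores are awarded). Swapping the order of summation in the second display therefore gives
\[
\sum_{j=0}^{m-1} \sum_{i,C} C_j\, x_{i,C} \;=\; \sum_{i,C} x_{i,C} \sum_{j=0}^{m-1} C_j \;=\; k \sum_{i,C} x_{i,C}.
\]
Chaining the two bounds through this identity yields
\[
km \;\leq\; \sum_{j} \sum_{i,C} C_j\, x_{i,C} \;=\; k \sum_{i,C} x_{i,C} \;\leq\; km,
\]
so equality holds throughout.

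From the aggregate equalities I would then deduce tightness of each individual constraint: each of the $m$ summands on the left of (\ref{eq:U1}) is at most $1$ and they sum to exactly $m$, so every one of them equals $1$; each of the $m$ summands on the left of (\ref{eq:U2}) is at least $k$ and they sum to exactly $km$, so every one of them equals $k$. I do not expect any real obstacle here; the argument is pure counting, and the only thing worth highlighting is the role of $\sum_j C_j = k$ as the bridge between the two constraint families.
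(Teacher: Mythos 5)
Your proof is correct and follows essentially the same route as the paper's: chain the summed form of~(\ref{eq:U2}) through the identity $\sum_j C_j = k$ into the summed form of~(\ref{eq:U1}) to squeeze $k\sum_{i,C}x_{i,C}$ between $km$ and $km$. Your final paragraph spelling out why aggregate equality forces each individual constraint to be tight is slightly more explicit than the paper, which leaves that last step implicit, but the argument is the same.
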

  \begin{proof}
  	Notice that by Equation~(\ref{eq:U2}):
  	\begin{align}
  	km &\leq \sum_{j=0}^{m-1} \sum_{i=1}^m \sum_{C \in \mathcal{C}_i(T)} C_{j} x_{i,C} \\
  	&=\sum_{i=1}^m \sum_{C \in \mathcal{C}_i(T)} x_{i,C}\sum_{j=0}^{m-1} C_j\\
  	&=k\sum_{i=1}^m \sum_{C \in \mathcal{C}_i(T)} x_{i,C}\label{eq:U3}\\
  	&\leq km \label{eq:U4}
  	\end{align}
  	where~(\ref{eq:U4}) holds by plugging (\ref{eq:U1}) into (\ref{eq:U3}). We therefore obtain  that $$\sum_{j=0}^{m-1} \sum_{i=1}^m \sum_{C \in \mathcal{C}_i(T)} C_{j} x_{i,C}=km$$ which forces both above non-trivial LP inequalities to be equalities.
  \end{proof}
 
 \begin{theorem}\label{thr:poly-U}
 	Given a fixed value $T$, the UCM C-LP can be solved in polynomial time.
 \end{theorem}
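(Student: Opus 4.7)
The plan is to apply the ellipsoid method to the LP dual, using a polynomial-time separation oracle built from Lemma~\ref{lem:multiknap}. Although the primal C-LP has exponentially many variables $x_{i,C}$, its dual will have only $O(m)$ variables and one constraint per candidate-configuration pair; the crux will be to separate over these exponentially many dual constraints by reducing violation-detection for a fixed candidate to a single $k$-multiset knapsack instance.

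First I would formulate the dual. Treating the primal as a feasibility problem (equivalently, $\min 0$ subject to~(\ref{eq:U1}), (\ref{eq:U2}), and $x \geq 0$) and associating variables $y_i \geq 0$ with the packing constraints~(\ref{eq:U1}) and $z_j \geq 0$ with the covering constraints~(\ref{eq:U2}), the dual becomes
\begin{align*}
\min \quad & \sum_{i=1}^m y_i - k\sum_{j=0}^{m-1} z_j \\
\text{s.t.} \quad & y_i \geq \sum_{j=0}^{m-1} C_j z_j \quad \forall i \in [m],\; \forall C \in \mathcal{C}_i(T),\\
& y_i, z_j \geq 0.
\end{align*}
By LP duality the primal is feasible iff this dual is not unbounded below, and the ellipsoid method decides this in polynomial time provided we can separate efficiently.

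Next I would design the separation oracle. Nonnegativity of a proposed $(y,z)$ is trivially checkable, so for each $i \in [m]$ the remaining question is whether some $C \in \mathcal{C}_i(T)$ satisfies $\sum_j C_j z_j > y_i$. Computing
\[
\mathrm{OPT}_i = \max \Bigl\{ \sum_{j=0}^{m-1} C_j z_j : C_j \in \mathbb{Z}_{\geq 0},\; \sum_j C_j = k,\; \sum_j \alpha_j C_j \leq T - \sigma_i \Bigr\}
\]
is exactly a $k$-multiset knapsack instance with items $j \in [m]_0$, item values $z_j$, item weights $\alpha_j$, capacity $T - \sigma_i \leq k\alpha_m$, and multiset size $k$. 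Since the paper assumes $\alpha_m$ is polynomial in $m$, Lemma~\ref{lem:multiknap} solves this in polynomial time, and the maximizing multiset (yielding the violated constraint if $\mathrm{OPT}_i > y_i$) is recovered by backtracking on the DP table.

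Finally I would invoke the Gr\"otschel--Lov\'asz--Schrijver equivalence between separation and optimization to conclude that the dual is optimizable in polynomial time, and recover a primal solution in the standard way: the ellipsoid execution queries the oracle at only polynomially many dual iterates and thus exposes only polynomially many configurations, so the primal restricted to this exposed subset of $(i,C)$ pairs has the same optimum as the full primal and can be solved directly to produce the values $x_{i,C}$. The main obstacle is showing the separation subproblem is tractable, and that is precisely what Lemma~\ref{lem:multiknap} delivers; everything else is a routine packaging of the ellipsoid framework.
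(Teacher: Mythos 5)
Your proposal is correct and follows essentially the same route as the paper: the same dual, the same reduction of separation for each fixed $i$ to a $k$-multiset knapsack solved via Lemma~\ref{lem:multiknap}, and the same recovery of the primal by restricting to the polynomially many configurations exposed during the ellipsoid run. The only nit is that $T-\sigma_i$ need not itself be bounded by $k\alpha_m$; the paper instead caps the knapsack capacity at $\min\{k\alpha_{m-1},\,T-\sigma_i\}$, which is what actually keeps the pseudo-polynomial DP polynomial.
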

 \begin{proof}
 	We need to refer to the LP dual for our C-LP in order to solve it; we briefly repeat some  LP duality concepts here, refer to~\cite{schrijver1998theory} for complete definitions and discussion.
 	 
 	 The dual of a maximization problem is a minimization problem. In order to define it we can treat our primal program as a maximization problem having all coefficients $0$ in its objective function. 
 	In the dual there is a variable for every constraint of the primal, and a constraint for every variable of the primal. Therefore, we define a variable $y_i$ for each candidate $c_i$ and a variable $z_j$ for each score-type $\alpha_j$ (since the primal has a constraint for each candidate $c_i$ and for each score-type $\alpha_j$). However, since our primal has an exponential number of variables, the dual will have an exponential number of constraints. We will show how to address this.
 	
 	In short, the non-trivial constraints are then obtained by transposing the constraint-coefficient matrix of the primal, using the primal objective function coefficients as the right-hand side of the dual constraints, and using right-hand side of the primal constraints as the coefficients of the dual objective function.

 	The process yields the following dual:
 	\begin{align*}
 	&\min_{\vec{y},\vec{z}} \sum_{i=1}^m y_i - k\sum_{j=0}^{m-1} z_j \\
 	\mbox{subject to:}\\
 	\sum_{j=0}^{m-1} C_{j}z_j&\leq y_i && \forall i\in [m], C \in \mathcal{C}_i(T) \\
 	y_i &\geq 0 && \forall i=1,\ldots,m\\
 	z_j &\geq 0 && \forall j=0,\ldots,m-1
 	\end{align*}
 	As mentioned, the dual has an exponential number of constraints. However it is  solvable; the \emph{ellipsoid method}~\cite{khachiyan1980polynomial} is a method for solving an LP which iteratively tries to find a point inside the feasible region described by the constraints. However, we do not need to provide all the constraints in advance. Instead, the algorithm can be provided with a subroutine, called a \emph{separation oracle}, to which it calls with a proposed point, and the subroutine then either confirms that the point is inside the feasible region or that it returns a violated constraint~\cite{DBLP:journals/combinatorica/GrotschelLS81}. The ellipsoid method algorithm performs a polynomial number of iterations, therefore if the separation oracle runs in polynomial time as well, the LP is solved in overall polynomial time. Notice that the polynomial number of iterations performed by the ellipsoid method implies that the number of constraints that played a part in finding the optimum (known as \emph{active} constraints) was polynomial as well. In other words, we could effectively discard all the constraints but a polynomial number of them.
 	
 	As discussed, a separation oracle for the dual, given a proposed solution $(\vec{y};\vec{z})$, needs to find in polynomial time a violated constraint, if exists. It remains to show that such a separation oracle is polynomial-time computable.
 
 	Observe that a violated constraint to this program is a pair $i,C$ for which $C\in \mathcal{C}_i(T)$ (and therefore $\sum_{j=0}^{m-1} C_{j} \alpha_j\leq T-\sigma_i$) and at the same time $\sum_{j=0}^{m-1} C_{j}z_j> y_i$. Fortunately, for a specified $i$, finding a configuration $C$ that induces a violated constraint can be seen as finding a $k$-multiset (since $\sum_{j=0}^{m-1} C_j = k$) given by a solution to our knapsack variant:  $[m]_0$ is the item set (over which $j$ ranges), the value for the item $j$ is $z_j$, while its weight is $\alpha_j$. The given value lower bound is  $y_i$, and  $T-\sigma_i$ is the given upper bound on the weight. Effectively, we use the possibly-tighter weight bound $\min\{k\alpha_{m-1},T-\sigma_i\}$ instead, as $k\alpha_{m-1}$ bounds the overall weight obtainable with a size-$k$ multiset. As now the weight bound is polynomial in $m$ and $k$, the solution to our knapsack variant becomes polynomial.

	We repeat this knapsack-solving step for each $i$ until we find a violated constraint, or conclude that no constraint is violated.
Once we have solved the dual using the ellipsoid method with the separation oracle, we can discard all variables in the primal that do not correspond to violated constraints of the dual, since the inclusion of those constraints (resp.\ their corresponding variables) did not have any effect on the dual optimum (resp.\ the primal optimum).\footnote{In other words, the dual of the dual without the discarded constraints is the primal without their corresponding variables. Another way to explain this is that this is exactly the complementary slackness condition of the Karush-Kuhn-Tucker conditions~\cite{karush1939minima,kuhntucker}, a necessary condition for obtaining the optimum.}
The primal now contains only a polynomial number of variables and can be solved directly using the ellipsoid method or any other known polynomial solvers for LP, such as~\cite{DBLP:journals/combinatorica/Karmarkar84}.  
\end{proof}

\section{Algorithm for UCM}
Solve the above mentioned configuration-LP formulation as described in Section~\ref{sec:lp-U}. As mentioned, while both constraints are inequalities, in any solution they will actually be equalities. For each candidate $c_i$, observe the variables $x_{i,C}$, $C \in \mathcal{C}_i(T) $. Since  $\sum_{C \in \mathcal{C}_i(T)}x_{i,C} = 1$, treat the $x_{i,C}$'s as a distribution over the configurations for $i$ and randomly choose one according to that distribution. For the time being, give $c_i$ this configuration.

While every candidate now has a valid configuration (and her score does not exceed $T$), it is possible that the number of scores of a certain type is above or below $k$. Formally, if candidate $c_i$ received a configuration $C^i$, let the array $H$ such that $H[j]=\sum_{i=1}^m C^i_j$ be the \emph{histogram} of the scores. It is then possible that $H[j]\neq k$. If we would translate the configuration given to each candidate to the list of the scores awarded within it, and would write this list as the column of a matrix, this matrix might not be a relaxed manipulation matrix. In order to solve this, we need to replace some of scores in this matrix with others such that the number of scores of each type will be $k$.  On the other hand, we need to make sure this process does not add much to the score of each candidate.

Let $t=(i,j)$ be a tuple representing the event that candidate $c_i$ received a score of $\alpha_j$ in its configuration. Place all such tuples in a single multiset (if $\alpha_j$ is awarded to $i$ more than once, repeat $(i,j)$ as needed). Now sort this multiset according to the $\alpha_j$ value in an non-decreasing manner (break ties between candidates arbitrarily) thus creating the event-sequence $t_0,\ldots,t_{km-1}$, i.e., the tuples are now indexed by their rank in this sequence. We now start the actually fixing of the scores given; for each tuple $t_\ell=(i,j)$ having rank $\ell$ in the list, we change the score awarded to $c_i$ (as described by the tuple) from $\alpha_j$ to $\alpha_{\lfloor \ell/k \rfloor}$. To perform this change in the algorithm, it is enough to set $C^i_j \gets C^i_j - 1$ followed by setting $C^i_{\lfloor \ell/k \rfloor} \gets C^i_{\lfloor \ell/k \rfloor} + 1$. This is correct as $C^i_{j'}$, for any $j'$, represents the number of $\alpha_{j'}$ scores awarded to $c_i$. 

Notice that now every score is repeated $k$ times (there are only $k$ possible  $\ell$ values mapping to the same $\lfloor \ell/k \rfloor$ value). Finally, the corrected configurations represent the final strategy. This can be easily represented as a relaxed configuration matrix by referring to the matrix $[\mathrm{MS}(C^1);\cdots;\mathrm{MS}(C^m)]$, where $\mathrm{MS}(C^i)$ is a column constructed by taking the configuration $C^i$, represented as an ordered-multiset of scores (each $j$ repeats $C^i_j$ times), in some arbitrary order.

The entire process is summarized as Algorithm~\ref{alg:approx-U}.

\begin{algorithm}[t]
	\caption{Approximation algorithm.}
	\label{alg:approx-U}
	Solve the C-LP as described in Section~\ref{sec:lp-U}\;
	\lForEach{$i$}{define distribution $q$ s.t.\ $q(C)=x_{i,C}$ for all $C \in \mathcal{C}_i(T)$ and randomly choose $C^i \sim q$.}
	$L \gets \langle \rangle$\tcc*{$L$ is the empty list} 
	\ForEach{$i\in [m]$, $j \in [m]_0$ }{Append $C^i_j$ copies of $(i,j)$ to $L$ \tcc*{$j$  represents the score type $\alpha_j$}}
	Sort $L$ in an ascending order by $\scoretype(\cdot)$\tcc*{$\scoretype(t)=j$ if $t=(i,j)$}
	Re-index $L$ such that $L=\langle t_0,\ldots,t_{km-1}\rangle$\;
	\For{$\ell=0,\ldots,km-1$}{
		Observe tuple $t_\ell=(i,j)$\; 
		\tcc{Assign the score $\alpha_{\lfloor \ell/k \rfloor}$ to $c_i$, instead of the previous $\alpha_{j}$:}
		$C^i_j \gets C^i_j - 1$\;
		$C^i_{\lfloor \ell/k \rfloor} \gets C^i_{\lfloor \ell/k \rfloor} + 1$\;
	}
		
	\Return{the relaxed manipulation matrix corresponding to $C^1,\ldots,C^m$}
\end{algorithm}

Let $\beta = d \sqrt{m\log m}$ for some constant $d$. Let $g(\vec{\alpha})=\max_{i=0,\ldots,m-\beta}\alpha_{i+\beta}-\alpha_i$. In words, $g(\vec{\alpha})$ is the biggest different between a score in $\vec{\alpha}$ and another score $\beta$ entries away from it. 
\begin{lemma}
Let $C\in \{C^1,\ldots,C^m\}$ be a configuration obtained for some candidate by the rounding process, and let $C'$ be its corrected version given by the process described above. Then with arbitrary-chosen polynomially-small failure probability, \begin{equation*}
\sum_{j=0}^{m-1} \alpha_j C'_j \leq \sum_{j=0}^{m-1}  \alpha_j C_j +   k \cdot g(\vec{\alpha})\ .
\end{equation*}.
\end{lemma}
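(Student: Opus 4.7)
The plan is to bound the per-tuple index shift between the original score type $j_\ell$ and the corrected score type $j^*(\ell) = \lfloor \ell/k \rfloor$, then sum over the $k$ tuples belonging to any fixed candidate. First I would observe that by monotonicity of $\vec{\alpha}$, if I can guarantee $j^*(\ell) - j_\ell < \beta$ for every rank $\ell$ (the other direction never hurts us, since it only makes the correction non-positive), then each of the candidate's $k$ score changes $\alpha_{j^*(\ell)} - \alpha_{j_\ell}$ is bounded above by $\alpha_{j_\ell + \beta} - \alpha_{j_\ell} \leq g(\vec{\alpha})$, and summing gives the desired $k \cdot g(\vec{\alpha})$ bound.

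Next I would reduce the one-sided index shift to a concentration statement about cumulative histogram counts. Let $S_j = \sum_{j' < j} H[j']$, so that after sorting, the tuples of score type $j$ occupy the rank interval $[S_j, S_{j+1})$. Hence a tuple at rank $\ell$ with original type $j_\ell$ satisfies $\ell < S_{j_\ell+1}$, and consequently $j^*(\ell) - j_\ell \geq \beta$ would force $S_{j_\ell+1} > (j_\ell + \beta)k$. Now writing $S_j = \sum_{i=1}^m Z^i_j$ where $Z^i_j = \sum_{j' < j} C^i_{j'} \in [0,k]$, the $Z^i_j$ are independent across $i$ (since the configurations $C^i$ are sampled independently), and using that the LP constraints (\ref{eq:U1}) and (\ref{eq:U2}) are tight (by the preceding lemma) together with $\sum_C x_{i,C} = 1$, one checks $\mathbb{E}[H[j]] = k$ and therefore $\mathbb{E}[S_j] = jk$. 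The task thus becomes showing $S_{j+1} - \mathbb{E}[S_{j+1}] < (\beta - 1)k$ for all $j \in [m]_0$ simultaneously.

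I would apply the generalized Hoeffding inequality with interval widths $b_i - a_i = k$, giving
\[
\Pr\!\left[S_j - \mathbb{E}[S_j] \geq (\beta-1)k\right] \leq \exp\!\left(-\frac{2(\beta-1)^2 k^2}{m k^2}\right) = \exp\!\left(-\frac{2(\beta-1)^2}{m}\right).
\]
Plugging in $\beta = d\sqrt{m \log m}$ yields $m^{-\Omega(d^2)}$, and a union bound over the $m$ values of $j$ leaves a failure probability of $m^{1 - \Omega(d^2)}$, which is polynomially small for $d$ sufficiently large (and the constant $d$ can be chosen to tune any prescribed polynomial). Conditioning on this good event and combining with the first paragraph completes the argument, concluding for the single candidate $C$ associated to the index $i$ of interest.

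The main obstacle I expect is being careful about what is random and what is fixed at each step: the rounding samples one configuration per candidate \emph{independently}, which is precisely what lets Hoeffding apply to $S_j = \sum_i Z^i_j$; without this independence across $i$, no concentration would be available. A secondary subtlety is that the statement concerns a single candidate, but the good event we condition on (all $S_j$'s concentrate) is a global event controlling all candidates at once — this is fine and in fact gives the stronger statement that the bound holds simultaneously for every candidate $C^1,\ldots,C^m$, which is exactly what the subsequent analysis of the algorithm's approximation guarantee will need.
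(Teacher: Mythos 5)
Your proposal is correct and follows essentially the same route as the paper: both arguments reduce the per-tuple index shift to concentration of the cumulative score-type histogram $G[j]=\sum_{j'\le j}H[j']$ (your $S_{j+1}$), written as a sum of independent per-candidate partial sums in $[0,k]$, then apply the generalized Hoeffding inequality with interval width $k$ and a union bound over the $m$ score types before summing the $\le g(\vec{\alpha})$ per-score increase over a candidate's $k$ tuples. The only differences are cosmetic (contrapositive phrasing of the rank bound, and your explicit remark that the good event is global across candidates, which the paper also implicitly relies on in its corollary).
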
  
\begin{proof}
Let $H$ be the histogram of the original configurations $C^1,\ldots,C^m$, and let the array $G$ be the array of histogram partial sums, i.e., $G[j]=\sum_{j'=0}^{j} H[j']$.  In a similar manner, define $D^i[j] = \sum_{j'=0}^{j} C^i_{j'}$ to be the partial sums array w.r.t.\ each candidate $c_i$. We will show that with high probability, $G[j] \leq (j+1)k + d k\sqrt{m \ln m}$.

Fix a specific $j$. Notice that 
\begin{equation*}
\E[G[j]] = \sum_{j'=0}^{j}\E[H[j']]=\sum_{j'=0}^{j} \sum_{\substack{i,C \\ C \in \mathcal{C}_i(T)}} C_{j'} x_{i,C} = (j+1)k
\end{equation*} 
according to the LP constraints, and  that $G[j]=\sum_{i=1}^m D^i[j]$, that is, $G[j]$ is a random variable which is the sum of the independent random variables $D^i[j]$ for $i=1,\ldots,m$. In addition, for every candidate $c_i$, it holds that $D^i[j] \in [0,k]$, as a configuration contains at most $k$ scores. Therefore, using the generalized Hoeffding inequality~\cite[Theorem~2]{hoeffding1963probability}:
\begin{align*}
\Pr \left[  G[j] - \E[G[j]] \geq \lambda \right] & \leq  \exp \left(-\frac{2\lambda^2}{\sum_{i=1}^m k^2	}\right)\\
 & =  \exp\left(-\frac{2\lambda^2 }{m k^2}\right)\ .
\end{align*} 
Setting $\lambda = d' k\sqrt{m \ln m} $, for some arbitrary constant $d'$, we get that $\Pr [  G[j] - \E[G[j]]  \geq d' k \sqrt{m\ln m} ] \leq 1/m^{d'}$, that is, the probability that we deviate from $\E[G[j]]$ by more than $\tilde{O}(k \sqrt{m})$ can be made arbitrarily polynomially small.
%
%
%
%
Using the union bound, the same can be made to hold for all $j=0,\ldots,m-1$ simultaneously.

Now observe a tuple $t_\ell=(i,j')$ before being possibly corrected by the algorithm. Since its rank $\ell$ in the sorted sequence is at most the number of scores whose type is at most $j'$, which is by definition $G[j']$, we get that $\ell \leq G[j'] \leq (j'+1)k+ d'k\sqrt{m\ln m} $, where the second inequality holds with high probability. Therefore by the algorithm changing the score $\alpha_{j'}$ to  $\alpha_{\lfloor \ell/k \rfloor} $, the score increases  by at most $\alpha_{\lfloor \ell/k \rfloor}  - \alpha_{j'}  \leq  \alpha_{(j'+1)+ d'\sqrt{m\ln m}}   -\alpha_{j'} \leq  g(\vec{\alpha}) $.

Now observe some candidate $c_i$ with a given configuration $C^i$ corrected to become a configuration $C'$ by the algorithm. Since at worst case, all of $c_i$'s $k$ scores where affected as such, her overall score has increased by at most $kg(\vec{\alpha})$.
\end{proof}

\begin{corollary}\label{corr-UCM}
The above algorithm provides an additive $kg(\vec{\alpha})$ approximation with high probability. By repeating the randomized rounding procedure a linear number of times, the failure probability becomes exponentially-small. The overall running time is polynomial.
\end{corollary}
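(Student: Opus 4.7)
} The plan is to combine the feasibility properties of the binary-searched C-LP with the score-increase lemma to get the additive guarantee, and then boost via independent repetitions.

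First I would fix the relation between the binary-searched threshold $T$ and $T^*$. The binary search returns the smallest integer $T_{\mathrm{LP}}$ for which the C-LP is feasible. Any integral optimal strategy attains max non-preferred score exactly $T^*$, and such a strategy yields a $\{0,1\}$-valued feasible solution of the C-LP at $T = T^*$ (set $x_{i,C}=1$ for the configuration awarded to $c_i$, and $0$ otherwise; Equations~(\ref{eq:U1}) and~(\ref{eq:U2}) are equalities by construction). Hence $T_{\mathrm{LP}} \le T^*$. Note the search range for $T$ lies between $\min_i \sigma_i$ and $\max_i \sigma_i + k\alpha_{m-1}$, which is polynomial in the input, so only $O(\log(k\alpha_{m-1}))$ C-LP feasibility checks are needed, each polynomial time by Theorem~\ref{thr:poly-U}.

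Second, run Algorithm~\ref{alg:approx-U} once on the fractional solution at $T_{\mathrm{LP}}$. By the preceding lemma, with polynomially-small failure probability the corrected configuration $C'$ assigned to every candidate $c_i$ satisfies $\sum_j \alpha_j C'_j \le \sum_j \alpha_j C^i_j + k\cdot g(\vec{\alpha})$. Since $c_i$'s pre-rounding uncorrected score is at most $T_{\mathrm{LP}} - \sigma_i$ (every chosen $C^i$ lies in $\mathcal{C}_i(T_{\mathrm{LP}})$ with probability one), the final score of $c_i$ is at most $T_{\mathrm{LP}} + k\cdot g(\vec{\alpha}) \le T^* + k\cdot g(\vec{\alpha})$. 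Taking a max over $i$ and invoking the union bound (which absorbs into the polynomially-small probability by increasing the constant $d$) yields the claimed additive approximation.

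Third, for the probability amplification: run the randomized rounding independently $\Theta(m)$ times on the same fractional solution, and return the matrix whose largest candidate score is smallest. Since the individual failure probability is at most $m^{-d}$ for a constant $d$ of our choice, the probability that \emph{all} of the $\Theta(m)$ trials fail is at most $m^{-dm} = 2^{-\Omega(m \log m)}$, which is exponentially small. Verification of a trial's outcome requires only computing the maximum final score in the returned relaxed manipulation matrix, which is linear-time. The total running time is therefore dominated by $O(\log(k\alpha_{m-1}))$ polynomial-time C-LP solves plus $\Theta(m)$ polynomial-time rounding steps, hence polynomial overall.

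The only non-routine step is the first one — correctly arguing $T_{\mathrm{LP}} \le T^*$ and that a successful rounding relative to $T_{\mathrm{LP}}$ yields an approximation relative to $T^*$; the rest is a standard union-bound-plus-amplification argument combined with the polynomiality of the C-LP solver established in Theorem~\ref{thr:poly-U}.
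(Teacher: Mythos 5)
Your proposal is correct and follows essentially the same route as the paper's proof: bound $T_{\CLP}\leq T^*$ by observing that the integral optimum is feasible for the C-LP, apply the score-increase lemma to every candidate, and amplify the success probability by repeating the rounding a linear number of times. Your write-up is somewhat more explicit about the feasible $\{0,1\}$ witness, the binary-search range, and the independence of trials, but these are elaborations rather than a different argument.
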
  
\begin{proof}
Let $T^*$ be the optimal value for the original problem, and let $T_{\CLP}$ be the best bound obtainable via the above C-LP combined with the binary search on $T$. Notice that $T_{\CLP} \leq T^*$, as the optimal solution is also a valid solution for the C-LP. Now observe the highest scoring candidate in the C-LP. When the algorithm terminates, we get that with high-probability her score is $T_{\CLP}+kg(\vec{\alpha}) \leq T^*+kg(\vec{\alpha})$. If we repeat the randomized rounding procedure  a linear number of times and pick the iteration yielding the minimum addition to $T_{\CLP}$, the probability of not getting a $kg(\vec{\alpha})$-approximation becomes exponentially-small.

As the additional score given by the algorithm to any other candidate is also $kg(\vec{\alpha})$, the bound $T^*+kg(\vec{\alpha})$ holds for all candidates. We conclude that this is indeed an $kg(\vec{\alpha})$ additive approximation.

As discussed, solving the C-LP is done in polynomial time (by the polynomial number of iterations of the ellipsoid method and the polynomial runtime of the $k$-multiset knapsack separation oracle). The rounding is dominated by going over a polynomial number of non-zero variables of the C-LP and is therefore polynomial as well. It is repeated a linear number fo times in order to provide an exponentially-small failure probability.
\end{proof}

From here we can directly obtain Corollary~\ref{thr:main}:
\begin{proof}[Proof of Corollary~\ref{thr:main}]
	By noticing that for Borda, $g(\vec{\alpha}) = O(\sqrt{m \log m})$.
\end{proof}

\section{Linear Programming for WCM}\label{sec:lp-W}
 When turning to the WCM problem, the `natural' LP still suffers from the deficiencies described in Section~\ref{sec:lp-U}. We again resort to using configurations. However, configurations will now be defined in a different manner, since now, when each voter has an associated weight, voters are not identical anymore and therefore our configurations need to capture the identity of the voters.
 
A configuration
$C$ for some candidate $c_i$ is now defined as a length-$k$ sequence in which $C_\ell=j$ if the voter $\ell$ awarded $\alpha_j$ to $c_i$.  
For a candidate $c_i$ and a bound $T$,  $\mathcal{C}_i(T)$ is again the set of  configurations that do not cause the candidate overall score to surpass $T$, which this time is formally   $\sum_{\ell=1}^{k}   w_\ell \alpha_{C_\ell}\leq T-\sigma_i$.
  
    The configuration LP is now formulated as follows:
    \begin{align}
    &\sum_{C \in \mathcal{C}_i(T)}x_{i,C}\leq 1 &\forall i\in[m]\ ,\label{eq:W1}\\
    &\sum_{\substack{i,C \in \mathcal{C}_i(T)\\C_\ell=j}} x_{i,C}\geq 1 &\forall j\in[m]_0, \forall \ell\in[k]  \ ,\label{eq:W2}\\
    &x_{i,C} \geq 0 & \forall i\in [m], C \in \mathcal{C}_i(T)\ .
    \end{align}     
  Again, we wish that the $x_{i,C}$'s would serve as indicator variables indicating whether or not $c_i$ was awarded with configuration $C$, (\ref{eq:W1}) guarantees that every candidate was given at most $1$ configuration and (\ref{eq:W2}) guarantees that every score was awarded by every voter at least once. The choice of inequalities over equalities will be explained soon.
   
We present another---much more complex---Knapsack variant, which will be used later by the separration orcale needed for solcing the C-LP.

 \subsection{$k$-Sequence Knapsack} Let $\{1,\ldots,m\}$ be a set of distinct items. In the \emph{$k$-sequence knapsack} problem we are required to construct a length-$k$ sequence $S=s_1,\ldots,s_k$ of items; we can repeat items from the item-set, however, we are subject to some additional constraints as will be specified immediately.
 The input to the problem is the following:
  \begin{itemize}
 	\item A value $v(j,\ell)$, for every $j \in [m]_0, \ell \in [k]$ where $v(j,\ell)$ is the value obtained by placing item $j$ at location $\ell$ in the sequence. 	
 	\item A cost $b(j)$ for each item $j$, and a penalty $p_\ell$ for each location $\ell\in [k]$. Placing an item $j$ at location $\ell$ in the sequence has a penalized-cost $p_\ell b(j)$, i.e., it depends on both the item's weight and the penalty for location $\ell$. 
 	\item A value lower-bound $V$.
 	\item A penalized-cost upper-bound $B$. 	 
\end{itemize}
	The resulting sequence $S$ should abide the following constraints:
  \begin{itemize}
 	\item  $S$'s overall value $\sum_{\ell=1}^{k}v(s_\ell,\ell)$ is  greater than $V$.
 	\item $S$'s overall penalized-cost $\sum_{\ell=1}^{k} p_\ell b(s_\ell)$ is at most  $B$.
 \end{itemize} 
%
%
%
%
%
 If such sequence $S$ exists, we should return it; otherwise we return that no such sequence exists.
 
 \begin{lemma}
 	The $k$-sequence knapsack can be solved in time polynomial in $k$,$m$, and $B$ (which is pseudo-polynomial due to the dependence on $B$).
 \end{lemma}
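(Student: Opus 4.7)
The plan is to solve the $k$-sequence knapsack by dynamic programming, mirroring the proof of Lemma~\ref{lem:multiknap}. The crucial structural observation is that, conditioned on how much penalized-cost has been spent so far, the problem decomposes position-by-position: at position $\ell$ one independently picks some item $j \in [m]_0$, paying penalized-cost $p_\ell b(j)$ and gaining value $v(j,\ell)$, and the only coupling between positions is the global budget $B$. This is exactly the kind of constraint that standard knapsack DP is built to handle, and the fact that the value function is position-dependent is no obstruction since the contribution $v(j,\ell)$ of placing $j$ at $\ell$ is already an input to the problem.

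First I would define a table $Q[b,\ell]$, for $b = 0, \ldots, B$ and $\ell = 0, \ldots, k$, intended to store the maximum value attainable by any length-$\ell$ prefix $s_1,\ldots,s_\ell$ whose penalized-cost $\sum_{r=1}^{\ell} p_r b(s_r)$ is at most $b$. The natural recurrence is
\begin{equation*}
Q[b,\ell] = \max_{j \in [m]_0} \left\{ v(j,\ell) + Q'[b - p_\ell b(j),\, \ell - 1] \right\},
\end{equation*}
where $Q'[b',\ell']$ equals $Q[b',\ell']$ when $b' \geq 0$ and $\ell' \geq 0$, and $-\infty$ otherwise; the base case is $Q[b,0] = 0$ for every $b \geq 0$. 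After filling the table in order of increasing $\ell$, the optimum is stored in $Q[B,k]$: if $Q[B,k] > V$ a valid sequence exists and can be recovered by standard backtracking (remembering at each cell the maximizing $j$), and otherwise we correctly declare infeasibility.

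The running-time analysis is then routine: the table has $O(Bk)$ entries, each of which takes $O(m)$ time to evaluate the maximum over $j$, for a total of $O(Bkm)$ operations --- polynomial in $k$, $m$, and $B$ as claimed. I do not anticipate any real obstacle, since the sequence structure adds no algorithmic difficulty beyond the multiset version of Lemma~\ref{lem:multiknap}: the pair $(j,\ell)$ is a single atomic decision whose contribution to both value and cost is fully determined by the inputs $v(j,\ell)$ and $p_\ell b(j)$. The pseudo-polynomial dependence on $B$ is unavoidable for the same reason it is in ordinary $0/1$ knapsack.
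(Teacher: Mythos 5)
Your proposal is correct and follows essentially the same route as the paper's own proof: the identical table $Q[b,\ell]$ indexed by penalized-cost budget and prefix length, the same recurrence maximizing over the item placed at position $\ell$, the same $-\infty$ handling of out-of-range indices, and the same $O(Bkm)$ running-time bound with backtracking for recovery. No gaps.
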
 
 \begin{proof}
 	Similar to the proof of Lemma~\ref{lem:multiknap}, we fill out a table $Q[b',\ell]$, for $b'=0,\ldots,B$ and $\ell=0,\ldots,k$, in which $Q[b',\ell]$ is the highest value obtainable with a length-$\ell$ sequence of items of penalized-cost at most $b$. This time $Q$ is filled using a different recursion: 
 	\begin{equation}
 	Q[b',\ell]=
 	\begin{cases}
 	0 & \text{if $\ell = 0$, }\\
 	\max_j v(j,\ell)  +  Q'(b'-p_\ell b(j),\ell-1)  & \text{otherwise,}
 	\end{cases} 
 	\end{equation} 
 	where $Q'(b',\ell)=Q[b',\ell]$ if it is defined, i.e., $b'\geq 0$ and $\ell \geq 0$, and otherwise is $-\infty$.
 	
 	Therefore $Q$ can be filled-out using dynamic programming. Finally, the entry $Q[B,k]$ contains the highest value obtainable with overall cost at most $B$. Therefore, if $Q[B,k] > V$, we have found a required sequence; otherwise such does not exists. The resulting sequence itself can be recovered using backtracking on the table $Q$. Noticed that the amount of work done is $O(Bk m)$. 
 \end{proof}

\subsection{Solving the WCM C-LP}
 We return to our problem. Again, the choice of inequalities over equalities is motivated by our use of the LP dual in Theorem~\ref{thr:poly-W}. However, they have the same effect as equalities, as shown by the following lemma:
  \begin{lemma}
  	In a solution to the above C-LP, Equations~(\ref{eq:W1},\ref{eq:W2}) will actually be equalities.
  \end{lemma}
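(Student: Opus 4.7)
The plan is to mimic the double-counting argument used in the analogous UCM lemma, adapted to the fact that configurations are now length-$k$ sequences rather than multisets. The key observation is that for each candidate--configuration pair $(i,C)$, and for each fixed position $\ell \in [k]$, the value $C_\ell$ is a single well-defined element of $[m]_0$. Hence, summing the left-hand side of (\ref{eq:W2}) over all $j \in [m]_0$ with $\ell$ held fixed partitions the pairs $(i,C)$ according to $C_\ell$, and so telescopes to $\sum_{i} \sum_{C \in \mathcal{C}_i(T)} x_{i,C}$.

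Concretely, first I would fix an arbitrary $\ell \in [k]$ and write
\begin{equation*}
\sum_{j=0}^{m-1} \ \sum_{\substack{i,\, C \in \mathcal{C}_i(T)\\ C_\ell = j}} x_{i,C}
\;=\; \sum_{i=1}^{m} \sum_{C \in \mathcal{C}_i(T)} x_{i,C},
\end{equation*}
since each pair $(i,C)$ contributes to exactly one term of the outer sum on the left (the one with $j = C_\ell$). By constraint (\ref{eq:W2}), the left-hand side is at least $m$, giving $\sum_{i,C} x_{i,C} \geq m$. On the other hand, summing (\ref{eq:W1}) over $i$ gives $\sum_{i,C} x_{i,C} \leq m$. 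Combining, $\sum_{i,C} x_{i,C} = m$.

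Next I would deduce tightness of each individual constraint from this sandwich. For (\ref{eq:W1}), the $m$ terms $\sum_{C \in \mathcal{C}_i(T)} x_{i,C}$ are each at most $1$ and sum to exactly $m$, forcing each to equal $1$. For (\ref{eq:W2}), with the same fixed $\ell$, the $m$ summands indexed by $j$ are each at least $1$ and sum to exactly $m$, so each must equal $1$; since $\ell$ was arbitrary, all instances of (\ref{eq:W2}) are equalities.

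I do not anticipate any real obstacle: the proof is a pure double-counting identity combined with a sandwich argument, and the only subtlety compared to the UCM version is that the outer summation in (\ref{eq:W2}) is now indexed both by the score type $j$ and the voter position $\ell$, so one must be careful to fix $\ell$ before summing over $j$ in order to recover the telescoping identity above.
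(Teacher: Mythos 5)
Your proof is correct and uses essentially the same double-counting-plus-sandwich argument as the paper; the only cosmetic difference is that you fix $\ell$ and sum over $j$ to get the bound $m \leq \sum_{i,C} x_{i,C} \leq m$, whereas the paper sums over all $(j,\ell)$ pairs at once to get $km \leq k\sum_{i,C} x_{i,C} \leq km$. Both rest on the same observation that each pair $(i,C)$ is counted exactly once per position $\ell$, so the arguments are interchangeable.
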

  \begin{proof}
  	Notice that by Equation~(\ref{eq:W2}):
  	\begin{align}
  	km &\leq \sum_{j,\ell} \sum_{i=1}^m \sum_{\substack{C \in \mathcal{C}_i(T)\\C_\ell=j}}  x_{i,C} \\
  	&=\sum_{i=1}^m \sum_{C \in \mathcal{C}_i(T) } \sum_{\substack{\ell,j\\C_\ell=j}} x_{i,C} \\
  	&=\sum_{i=1}^m \sum_{C \in \mathcal{C}_i(T)} x_{i,C} \sum_{\substack{\ell,j\\C_\ell=j}}1 \\
  	&=k\sum_{i=1}^m \sum_{C \in \mathcal{C}_i(T)} x_{i,C}\label{eq:W3}\\
  	&\leq km \label{eq:W4}
  	\end{align}
  	where (\ref{eq:W4}) holds by plugging (\ref{eq:W1}) into (\ref{eq:W3}). We therefore obtain that $$\sum_{j,\ell} \sum_{i=1}^m \sum_{C \in \mathcal{C}_i(T)} x_{i,C}=km$$ which forces both above non-trivial LP inequalities to be equalities.
  \end{proof}
 
 \begin{theorem}\label{thr:poly-W}
 	Given a fixed value $T$, the WCM C-LP can be solved in polynomial time.
 \end{theorem}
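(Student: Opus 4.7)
The plan is to mirror the strategy used for Theorem~\ref{thr:poly-U}: pass to the LP dual, which has polynomially many variables but exponentially many constraints, and then solve it with the ellipsoid method equipped with a separation oracle that reduces to the $k$-sequence knapsack problem developed in the preceding subsection.

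First, I would formulate the dual. Treating the primal as a maximization with zero objective, introduce a variable $y_i \geq 0$ for each candidate-constraint in~(\ref{eq:W1}) and a variable $z_{j,\ell} \geq 0$ for each $(j,\ell)$-constraint in~(\ref{eq:W2}). Each primal variable $x_{i,C}$ yields the dual constraint
\begin{equation*}
\sum_{\ell=1}^{k} z_{C_\ell,\,\ell} \;\leq\; y_i \qquad \forall\, i \in [m],\ C \in \mathcal{C}_i(T),
\end{equation*}
and the dual objective is to minimize $\sum_{i=1}^m y_i - \sum_{j\in[m]_0,\ell\in[k]} z_{j,\ell}$. This dual has $m + mk$ variables but exponentially many constraints, so we need a separation oracle.

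Second, I would design the separation oracle. Given a candidate dual solution $(\vec{y},\vec{z})$, a violated constraint is a pair $(i,C)$ with $C\in\mathcal{C}_i(T)$ (so $\sum_{\ell=1}^k w_\ell\, \alpha_{C_\ell}\leq T-\sigma_i$) and $\sum_{\ell=1}^k z_{C_\ell,\ell} > y_i$. For each fixed $i$, searching for such a $C$ is precisely an instance of the $k$-sequence knapsack: the item set is $[m]_0$ (over which each $C_\ell$ ranges), the value of placing item $j$ at location $\ell$ is $v(j,\ell) = z_{j,\ell}$, the item cost is $b(j) = \alpha_j$, the location penalty is $p_\ell = w_\ell$, the value lower bound is $y_i$, and the penalized-cost upper bound is $T - \sigma_i$. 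Since any valid sequence has penalized-cost at most $W \alpha_{m-1}$, we may use the tighter bound $B = \min\{W\alpha_{m-1},\,T-\sigma_i\}$, which is polynomial in $m$ and $W$ because $\alpha_{m-1}$ is polynomial in $m$ by assumption. Running the knapsack subroutine for each $i\in[m]$ therefore finds a violated constraint (or certifies that none exists) in polynomial time.

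Third, I would invoke the ellipsoid method~\cite{khachiyan1980polynomial,DBLP:journals/combinatorica/GrotschelLS81} on the dual using this separation oracle; it converges in polynomially many iterations, each calling the oracle once. As in the UCM proof, only a polynomial number of dual constraints are active, so the primal variables corresponding to inactive dual constraints may be discarded; the resulting reduced primal has polynomially many variables and can be solved directly by any polynomial-time LP solver. The main obstacle, compared to the UCM case, is precisely setting up the oracle: the dual constraint structure now involves a per-position variable $z_{j,\ell}$ rather than a single $z_j$, and the feasibility set $\mathcal{C}_i(T)$ is a constraint on a weighted sum rather than a plain sum; this forces the separation problem to become a sequence (rather than multiset) knapsack, which is exactly why the more elaborate $k$-sequence knapsack variant was introduced.
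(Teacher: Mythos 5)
Your proposal is correct and follows essentially the same route as the paper's own proof: the same dual with variables $y_i$ and $z_{j,\ell}$ and constraints $\sum_{\ell=1}^k z_{C_\ell,\ell}\leq y_i$, the same separation oracle via the $k$-sequence knapsack with values $z_{j,\ell}$, costs $\alpha_j$, penalties $w_\ell$, and the tightened bound $\min\{W\alpha_{m-1},T-\sigma_i\}$, followed by the ellipsoid method and solving the reduced primal as in Theorem~\ref{thr:poly-U}. No gaps beyond those already present in the paper's own argument.
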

 \begin{proof}
 	We again refer to the LP-dual, which this time is:
 	\begin{align*}
 	&\min_{\vec{y},\vec{z}} \sum_{i=1}^m y_i - \sum_{j,\ell} z_{j,\ell} \\
 	\mbox{subject to:}\\
 	\sum_{\substack{j,\ell\\C_\ell=j}} z_{j,\ell}&\leq y_i && \forall i\in [m], C \in \mathcal{C}_i(T) \\
 	y_i &\geq 0 && \forall i=1,\ldots,m\\
 	z_{j,\ell} &\geq 0 && \forall (j,\ell) \in [m]_0\times[k]
 	\end{align*}
 	Furthermore, the above single non-trivial constraint can be more conveniently re-written as
 	\begin{align*}
 	\sum_{\ell=1}^{k} z_{C_\ell,\ell}&\leq y_i && \forall i\in [m], C \in \mathcal{C}_i(T) \ .
 	\end{align*}
 	We again turn to the ellipsoid method with a separation oracle; this time, a violated constraint to this program is a pair $i,C$ for which $C\in \mathcal{C}_i(T)$ (and therefore $\sum_{\ell=1}^{k}   w_\ell \alpha_{C_\ell}\leq T-\sigma_i$) and at the same time $\sum_{\ell=1}^{k} z_{C_\ell,\ell} > y_i$. For a specified $i$, finding a configuration $C$ that induces a violated constraint can be seen as finding a $k$-sequence given by a solution to our knapsack variant:  $[m]_0$ is the item set (over which $j$ ranges), the value for placing item $j$ at location $\ell$ is $z_{j,\ell}$, item $j$'s cost is $\alpha_j$, and the penalty for location $\ell$ is $w_\ell$. The given value lower bound is  $y_i$, and  $T-\sigma_i$ is the given upper bound on the penalized cost. Effectively, we use the possibly-tighter weight bound $\min\{W\alpha_{m-1},T-\sigma_i\}$ instead, as $W\alpha_{m-1}$ bounds the overall cost obtainable with a length-$k$ sequence. As now the weight bound is polynomial in $\alpha_{m-1}$ and $W$, the solution to our knapsack variant becomes polynomial.

	We repeat this knapsack-solving step for each $i$ until we find a violated constraint, or conclude that no constraint is violated.
Once we have solved the dual using the ellipsoid method with the separation oracle, we continue in a similar fashion to the proof of Theorem~\ref{thr:poly-U}.
\end{proof}

\section{Algorithm for WCM}
Solve the above mentioned configuration-LP formulation as described in Section~\ref{sec:lp-W}. As both constraints will be equalities, for each candidate $c_i$, we treat the $x_{i,C}$'s as a distribution over the configurations for $i$ since $\sum_{C \in \mathcal{C}_i(T)}x_{i,C} = 1$,  and randomly choose one according to that distribution. For the time being, give $i$ this configuration.

As for UCM, every candidate now has a valid configuration but constraints may still be violated; it is possible that the number of scores of a certain type $\alpha_j$ given by a specific voter $\ell$ is not exactly $1$. Formally, fix a specific voter $\ell$; we let the array $H$ such that $H[j]=\abs{\{i \mid C^i_\ell =j\} }$ be histogram of the scores with respect to $\ell$. It is then possible that $H[j]\neq 1$. Our goal, as before, is to fix this without introducing too much of an addition to the candidates' overall scores. However, there is now some added complexity due to the necessity to preserve the identity of the voter when fixing a specific score given.

Let $t=(i,\ell,j)$ be a tuple representing the event that candidate $c_i$ received a score of $\alpha_j$ from voter $\ell$ in its configuration. Fix a manipulator $\ell$, and place all tuples having $\ell$ as their respective voter in a set $A$, that is $A=\{ (i,\ell,C^i_\ell) \mid i = 1,\ldots,m \}$. Sort $A$ according to each tuple $(i,\ell,C^i_\ell)$'s score-index $C^i_\ell$, and let $L=\langle t_1,\ldots , t_m \rangle$ be the resulting list. Notice that  any tuple $t_j=(i,\ell,j')$ in $L$ represents the event that currently  $C^i_\ell=j'$. Now change $C^i_\ell$ such that $C^i_\ell \gets j$. In words, $C^i_\ell$ gets the rank of its respective tuple in the sorted list $L$. In effect, the score awarded to $c_i$ by $\ell$  changes from $\alpha_{j'}$ to $\alpha_j$.



We repeat the above process for each voter. Notice that now every score is repeated $k$ times, one by each voter.
The process is summarized as Algorithm~\ref{alg:approx-W}.

\begin{algorithm}[t]
	\caption{WCM Approximation algorithm.}
	\label{alg:approx-W}
	Solve the C-LP as described in Section~\ref{sec:lp-W}\;
	\lForEach{$i$}{define distribution $q$ s.t.\ $q(C)=x_{i,C}$ for all $C \in \mathcal{C}_i(T)$ and randomly choose $C^i \sim q$.}
	\For{$\ell\gets 1$ \KwTo $k$}{
		Let $A=\{ (i,\ell,C^i_\ell) \mid i = 1,\ldots,m \}$\tcc*{tuple $(i,\ell,j)$ represnts the event the $\ell$ awarded $\alpha_j$ to $c_i$}
		Sort $A$ in an ascending order by $\scoretype(\cdot)$ and let $L=\langle t_0,\ldots, t_{m-1} \rangle $ be the resulting list.\tcc*{$\scoretype(t)=j$ if $t=(i,\ell,j)$}
		\For{$j\gets 0$ \KwTo $m-1$}{
			Observe tuple $t_j=(i,\ell,j')$.\;
			$C^i_\ell \gets j$
			\tcc*{this assigns the score $\alpha_j$ to $c_i$, instead of the previous $\alpha_{j'}$}}
	}
	\Return{the resulting manipulation matrix $[C^1;\cdots;C^m]$}\tcc*{place $C^i$ as the $i$-th column vector of the resulting matrix}
\end{algorithm}

As before, let $\beta = d \sqrt{m\log m}$ for some constant $d$ and  let $g(\vec{\alpha})=\max_{i=0,\ldots,m-\beta}\alpha_{i+\beta}-\alpha_i$. 
\begin{lemma}
Let $C\in \{C^1,\ldots,C^m\}$ be a configuration obtained for some candidate by the rounding process, and let $C'$ be its corrected version given by the process described above. Then with arbitrary-chosen polynomially-small failure probability, \begin{equation*}
\sum_{j=0}^{m-1} \alpha_j C'_j \leq \sum_{j=0}^{m-1}  \alpha_j C_j +   W \cdot g(\vec{\alpha})\ .
\end{equation*}.
\end{lemma}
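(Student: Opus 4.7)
The plan is to mirror the UCM concentration argument, but now applied per voter. Fix a voter $\ell \in [k]$ and consider the histogram $H_\ell[j] = \abs{\{i : C^i_\ell = j\}}$ of score indices awarded by voter $\ell$ in the randomly rounded configurations $C^1,\ldots,C^m$, together with its cumulative sums $G_\ell[j] = \sum_{j'=0}^{j} H_\ell[j']$. Because LP constraint~(\ref{eq:W2}) holds with equality, $\E[H_\ell[j]] = 1$, hence $\E[G_\ell[j]] = j + 1$. Crucially, $G_\ell[j] = \sum_{i=1}^{m} \mathbf{1}[C^i_\ell \leq j]$ is a sum of $m$ independent $\{0,1\}$-valued random variables, one per candidate, since configurations for distinct candidates are drawn independently.

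Next, apply the classic Hoeffding inequality to each $G_\ell[j]$ to conclude $\Pr[G_\ell[j] - (j+1) \geq d'\sqrt{m \ln m}] \leq m^{-2d'}$ for any desired constant $d'$, and then union-bound over all pairs $(\ell, j) \in [k] \times [m]_0$. Since $k$ is polynomial in the input size, the overall failure probability remains polynomially small, and $d'$ may be tuned to make it as small as $m^{-c}$ for any fixed constant $c \geq 1$. On the complementary good event, $G_\ell[j] \leq (j+1) + \beta$ simultaneously for every voter and every score index.

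On this good event, control the per-voter rank change directly. When the algorithm processes voter $\ell$, a tuple that ends up at rank $j$ in the sorted list $L$ after having original score index $j'$ must satisfy $j + 1 \leq G_\ell[j']$, and therefore $j \leq j' + \beta$. Since $\vec{\alpha}$ is non-decreasing, $\alpha_j - \alpha_{j'} \leq \alpha_{j'+\beta} - \alpha_{j'} \leq g(\vec{\alpha})$ when this quantity is positive, and is non-positive otherwise. Hence the increase in candidate $c_i$'s score contributed by voter $\ell$ after correction is at most $w_\ell \cdot g(\vec{\alpha})$, and summing over $\ell = 1,\ldots,k$ yields a total score increase of at most $\sum_{\ell=1}^{k} w_\ell \cdot g(\vec{\alpha}) = W \cdot g(\vec{\alpha})$, as required.

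The main technical obstacle compared to UCM is that the histogram must be analyzed separately for each of the $k$ voters rather than once globally, so the union bound now ranges over $km$ events instead of $m$; this is still fine since $k$ is polynomial in the input. A secondary subtlety is that voter weights enter the analysis only through the multiplicative factor $w_\ell$ in each voter's per-voter contribution, so the final sum telescopes to $W$ naturally, without requiring any additional rescaling or a more refined concentration argument.
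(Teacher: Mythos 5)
Your proposal is correct and follows essentially the same route as the paper's proof: a per-voter histogram of score indices, cumulative sums whose expectation is $j+1$ by the LP equality, the classic Hoeffding bound plus a union bound over all $km$ voter--index pairs, the rank-versus-cumulative-count argument bounding each index shift by $\beta$, and the final per-voter contribution of $w_\ell\cdot g(\vec{\alpha})$ summing to $W\cdot g(\vec{\alpha})$. No substantive differences to report.
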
  
\begin{proof}
Fix a specific voter $\ell$. Let $H$ be the histogram with respect to voter $\ell$ over the original configurations $C^1,\ldots,C^m$, i.e. $H[j]=\abs{\{i \mid C^i_\ell =j\} }$. Let the array $G$ be the array of histogram partial sums, i.e., $G[j]=\sum_{j'=0}^{j} H[j']=\abs{\{i \mid C^i_\ell \leq j\} }$.  We also define $D^i[j]$ to be a Bernoulli variable which is equal to $1$ if $C^i_\ell \leq j$, and $0$ otherwise. We will show that with high probability, $G[j] \leq (j+1) + d \sqrt{m \ln m}$.

Fix a specific $j$. Notice that 
\begin{equation*}
\E[G[j]] = \sum_{j'=0}^{j}\E[H[j']]=\sum_{j'=0}^{j} \sum_{\substack{i, C\\C \in \mathcal{C}_i(T)\\C_{\ell}=j'}}  x_{i,C} = j+1
\end{equation*} 
according to the LP constraints, and  that $G[j]=\sum_{i=1}^m D^i[j]$, that is, $G[j]$ is a random variable which is the sum of the independent random variables $D^i[j]\in \{0,1\}$ for $i=1,\ldots,m$. Therefore, using the `classic' Hoeffding inequality:
\begin{equation*}
\Pr \left[  G[j] - \E[G[j]] \geq \lambda \right]  \leq e^{-\frac{2 \lambda^2}{m}}\ .
\end{equation*} 
Setting $\lambda = d' \sqrt{m\ln m} $, for some arbitrary constant $d'$, we get that $\Pr [  G[j] - \E[G[j]]  \geq d' \sqrt{m\ln m} ] \leq 1/m^{d'}$, that is, the probability that we deviate from $\E[G[j]]$ by more than $\tilde{O}(\sqrt{m})$ can be made arbitrarily polynomially small.
%
%
%
%
Using the union bound, the same can be made to hold for all $j=0,\ldots,m-1$ and $\ell=1,\ldots,k$ simultaneously.

Now observe a tuple $t_j=(i,\ell, j')$ before being possibly corrected by the algorithm. Since its rank $j$ in the sorted sequence is at most the number of scores given by $\ell$ whose type is at most $j'$, which is by definition $G[j']$, we get that $j \leq G[j'] \leq (j'+1)+ d'\sqrt{m\ln m} $, where the second inequality holds with high probability. Therefore by the algorithm changing the score $\alpha_{j'}$ to  $\alpha_{j} $, the score increases  by at most $\alpha_{j}  - \alpha_{j'}  \leq  \alpha_{(j'+1)+ d'\sqrt{m\ln m}}   -\alpha_{j'} \leq  g(\vec{\alpha}) $.

Now observe some candidate $c_i$ with a given configuration $C^i$ corrected to become a configuration $C'$ by the algorithm. Since at worst case, all of $c_i$'s $k$ scores where affected as such, her overall score has increased by at most $\sum_{\ell=1}^k (w_\ell g(\vec{\alpha})) = Wg(\vec{\alpha}) $.
\end{proof}

\begin{corollary}
The above algorithm provides an additive $Wg(\vec{\alpha})$ approximation with high probability. By repeating the randomized rounding procedure a linear number of times, the failure probability becomes exponentially-small. The overall running time is polynomial.
\end{corollary}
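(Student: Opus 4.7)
The plan is to mirror the argument used for Corollary~\ref{corr-UCM}, substituting the newly established WCM rounding lemma for its UCM counterpart. Let $T^*$ denote the optimum value of the original min-max problem and let $T_{\CLP}$ be the smallest value of $T$ for which the WCM C-LP of Section~\ref{sec:lp-W} admits a feasible solution, as located by the one-sided binary search. Since any optimal integral manipulation strategy yields a feasible (indeed $0/1$) solution to the C-LP with right-hand side $T^*$, we have $T_{\CLP} \leq T^*$. By Theorem~\ref{thr:poly-W} the C-LP is solvable in polynomial time for any fixed $T$, so the binary search over $T$ (whose range is polynomially bounded since $\alpha_m$ is polynomial in $m$) produces $T_{\CLP}$ in polynomial time.

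Next I would combine this lower bound with the preceding lemma. Since the C-LP constraint~(\ref{eq:W1}) becomes an equality in any feasible solution, the values $x_{i,C}$ constitute a valid distribution over $\mathcal{C}_i(T_{\CLP})$ for every candidate $c_i$, and the sampled configurations $C^i$ each satisfy $\sum_{\ell=1}^{k} w_\ell \alpha_{C^i_\ell} \leq T_{\CLP} - \sigma_i$. The lemma then guarantees that with arbitrarily-chosen polynomially-small failure probability, each corrected configuration $C'$ satisfies
\begin{equation*}
\sigma_i + \sum_{\ell=1}^{k} w_\ell \alpha_{C'_\ell} \;\leq\; \sigma_i + \sum_{\ell=1}^{k} w_\ell \alpha_{C^i_\ell} + W g(\vec{\alpha}) \;\leq\; T_{\CLP} + W g(\vec{\alpha}) \;\leq\; T^* + W g(\vec{\alpha}).
\end{equation*}
Since this holds \emph{simultaneously} for every candidate (the lemma's high-probability event already unions over all $j$ and $\ell$, and we further union over the $m$ candidates, preserving polynomial smallness by absorbing the factor into the constant $d'$), the output is a valid relaxed manipulation matrix whose maximum candidate score exceeds $T^*$ by at most $Wg(\vec{\alpha})$.

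For the probability boost, I would independently repeat the rounding step a linear (in $m$) number of times and return the best outcome; each repetition fails with probability at most $m^{-d'}$, so the probability that \emph{all} repetitions fail is at most $m^{-d' \cdot \Theta(m)}$, which is exponentially small in $m$. Finally, for the runtime: solving the C-LP is polynomial by Theorem~\ref{thr:poly-W}; the binary search adds only a logarithmic factor; and each rounding iteration operates on the polynomially-many non-zero variables surviving the ellipsoid-based solution, sorts a length-$m$ list per voter, and performs constant work per tuple, so each iteration is polynomial. Multiplying by the linear number of repetitions keeps the overall running time polynomial, completing the proof. The only step that is not entirely mechanical is the simultaneous union bound across all $(j,\ell)$ pairs and candidates, but this is handled cleanly by choosing the constant $d'$ sufficiently large.
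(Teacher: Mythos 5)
Your proposal is correct and follows essentially the same route as the paper, which simply invokes the proof of Corollary~\ref{corr-UCM} verbatim with $W$ in place of $k$: bound $T_{\CLP}\leq T^*$ via feasibility of the optimal integral solution, apply the rounding lemma to every candidate, and amplify by repeating the independent rounding step linearly many times. The extra details you supply (the union bound over candidates and the polynomial range of the binary search on $T$) are consistent with, and slightly more explicit than, the paper's argument.
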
  
\begin{proof}
Identical to the proof of Corollary~\ref{corr-UCM}, the only difference being $W$ used instead of $k$.
%
%
\end{proof}

From here we can directly obtain Corollary~\ref{thr:main2}:
\begin{proof}[Proof of Corollary~\ref{thr:main2}]
	By noticing that for Borda, $g(\vec{\alpha}) = O(\sqrt{m \log m})$.
\end{proof}

\section{Implementation}
We implemented our algorithm for the case for Borda-UCM and Borda-WCM and uploaded the code to a public repository\footnote{\texttt{https://github.com/okeller/BordaManipulation}}. The main subroutine is solving the LP duals that we have defined. However, the dependency on an LP solver using the ellipsoid method with a separation oracle proved difficult as to the best of our knowledge there is no library which enables solving an LP this way. Instead we simulated this by using a  general LP-solving library~\cite{cvxopt,glpk} and running the separation oracle externally as described in Algorithm~\ref{alg:simul}.
\begin{algorithm}[t]
	\caption{Simulating the ellipsoid method with a separation oracle.}
	\label{alg:simul}
	\KwIn{A linear program $P=(f,S)$ with an objective function $f$ and a separation oracle $S$ (instead of an explicit list of constraits)}
	Let $R\gets \emptyset$\tcc*{$R$ will be a list of effective constraints}
	Let $P' \gets (f,R)$ \tcc*{$P'$ is $P$ without any constraints}
	$\vec{x} \gets \LPS(P')$\;	
	\While{$S(\vec{x})$ returns a violated constraint $r$}{	 
		$R\gets R \cup \{r\}$\;
		$P' \gets (f,R)$\;
		$\vec{x} \gets \LPS(P')$\;
	}
	\Return{$\vec{x}$}
\end{algorithm}

\subsection{Practical Heuristics}
We used the following heuristics in our implementation:
\begin{itemize}
	\item We obtained a running-time speedup was modifying the separation oracle to return a set of violated constraints, one for each $i$ (if such exists), instead of a single violated constraint, and adding all of them to the list of constraints. 
	\item When sorting by the score index after the rounding part, we broke ties between tuples having the same score in favor of the candidate with the higher current score. This way, candidates with lower score are more likely to have a score index awarded to them increased by the fixing procedure.   
	\item   Our practical running time is dominated by solving the C-LP. Therefore, we can perform the randomized part of rounding process several times, and pick the best one, while incurring only a negligible overhead to the runtime. 
\end{itemize}

 
 \section{Experiments}
 \subsection{Experimental Setting}
 To evaluate our algorithms for both UCM and WCM in a well-studied setting, we have run experiments for Borda-UCM and Borda-WCM on sets of values for $n$, $k$ and $m$ (choice of values will be explained shortly). For the unweighted case, our algorithm was compared to  \textsc{Average Fit}  that was shown empirically to  outperform \rev{}~\cite{DBLP:journals/ai/DaviesKNWX14}. For the weighted case, as we are not aware of a generalization of \textsc{Average Fit} to a weighted setting, we compared against \rev{}.

 We have chosen $n=2k$, as having one manipulator for every two original voters represents a sweet-spot where manipulators have enough power to change outcomes, but not too much; for each $n,k,m$ combination, we have run $20$ experiments in which we have drawn the non-manipulator votes from a uniform distribution. We have chosen  $k \approx \sqrt{m}$ or smaller, as our algorithms are suited for low $k$ values:  as mentioned, our algorithm for Borda-UCM is theoretically competitive when $k=o(\sqrt{m / \log m})$, and we have wished to verify this also in an empirical setting. Lower $k$ values are also the cases which are more difficult to \textsc{Average Fit} heuristic of~\cite{DBLP:journals/ai/DaviesKNWX14}.

 \subsection{Borda-UCM}
 We are comparing our results to those obtained by \textsc{Average Fit}, and to the fractional solution, $T_{\CLP}$. The results are summarized in Figure~\ref{fig:res}.
 As can be seen, our algorithm performs well in practice.
In the many of  the cases, both the C-LP and \textsc{Average Fit} were known to be identical to the optimal solution, as depicted in Figure~\ref{subfig:res_a}:  these are the instances in which the C-LP solution  did not increase when performing the LP rounding. Therefore, in those cases we know what is  the real optimal solution, as $T_{\CLP}$ (resp.\ our final algorithm result) provides a lower (resp.\ an upper) bound for it, and both are equal. Thus our formulation in many times provides a very efficient method for verifying whether our algorithm (or any other algorithm) finds an optimal solution.   
 
In cases where the C-LP solution increased when performing the LP rounding, C-LP tends to be better than \textsc{Average Fit} on small $k$ values, and worse on larger ones. This is depicted in Figures~\ref{subfig:res_b} and~\ref{subfig:res_c}. For instance, consider our example from before, this time with $p$: $k=2$, $m=5$ and  $(\sigma_0,\ldots,\sigma_5)=(0,5,6,6,6,7)$. Obviously both methods will award $p$ with $5+5=10$. However in ours the top score of a candidate who is not $p$ will be $10$, and in \textsc{Average Fit} it is $12$. Therefore, the choice of algorithm determines if $p$ wins or not by a difference of $2$.
 \begin{figure}[hbpt]
 	\centering
 	\begin{subfigure}[hbpt]{\columnwidth}
 	        \includegraphics[scale=0.46]{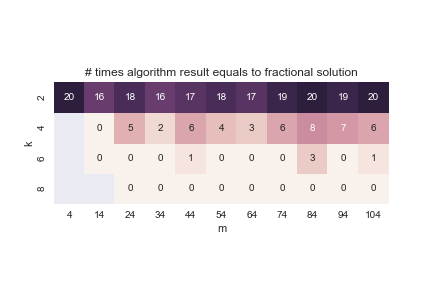}
 	        \caption{Number of cases for various value of $m$ and $k$ (out of $20$ experiments carried out for each $m,k$), where our algorithm provided results equal to the fractional solution $T_{\CLP}$ (and therefore, to the optimum $T^*$).}
 	        \label{subfig:res_a}
 	\end{subfigure}
 	\begin{subfigure}[hbpt]{\columnwidth}
 	        \includegraphics[scale=0.46]{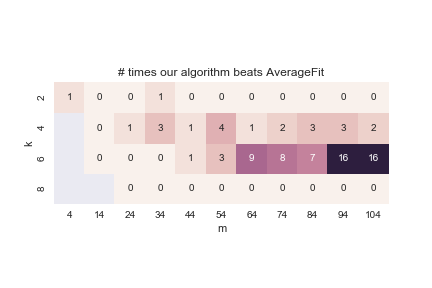}
 	        \caption{Number of cases for various value of $m$ and $k$ (out of $20$ experiments carried out for each $m,k$), where the C-LP method provided strictly better results compared to \textsc{Average Fit}.}
 	        \label{subfig:res_b}
 	\end{subfigure}
	 \begin{subfigure}[hbpt]{\columnwidth}
	 	\includegraphics[scale=0.46]{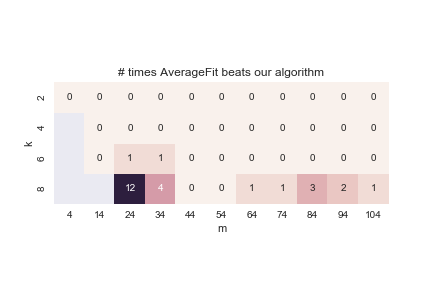}
	 	\caption{Number of cases for various value of $m$ and $k$ (out of $20$ experiments carried out for each $m,k$), where the \textsc{Average Fit} provided strictly better results compared to the C-LP method.}
	 	\label{subfig:res_c}
	 \end{subfigure}
 	\caption{Experimental results for Borda-UCM.}
 	\label{fig:res}
 \end{figure}

 \subsection{Borda-WCM}
 For the weighted case, as \textsc{Average Fit} has no weighted generalization, our competitor is the weighted variant of \rev{}. Here the results were much more conclusive: we experimented with the same type of values, choosing the manipulator weights randomly from $\{1,2\}$. The reasoning behind this was to explore the weighted case on one hand, but on the other hand not to give any one manipulator excessive power over others by giving him   a large weight. As the similar tables show (Figure~\ref{fig:res-w}), our algorithm beats \rev{} on any instance.
\begin{figure}[hbpt]
	\centering
	\begin{subfigure}[hbpt]{\columnwidth}
		\includegraphics[scale=0.46]{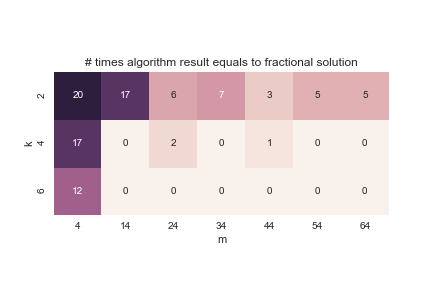}
		\caption{Number of cases for various value of $m$ and $k$ (out of $20$ experiments carried out for each $m,k$), where our algorithm provided results equal to the fractional solution $T_{\CLP}$ (and therefore, to the optimum $T^*$).}
		\label{subfig:res_a-w}
	\end{subfigure}
	\begin{subfigure}[hbpt]{\columnwidth}
		\includegraphics[scale=0.46]{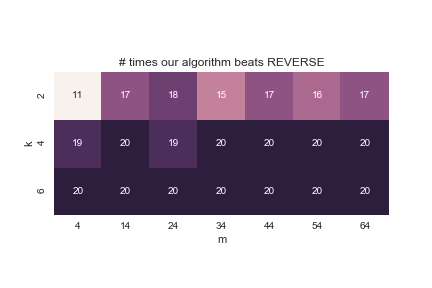}
		\caption{Number of cases for various value of $m$ and $k$ (out of $20$ experiments carried out for each $m,k$), where the C-LP method provided strictly better results compared to \rev{}.}
		\label{subfig:res_b-w}
	\end{subfigure}
	\begin{subfigure}[hbpt]{\columnwidth}
		\includegraphics[scale=0.46]{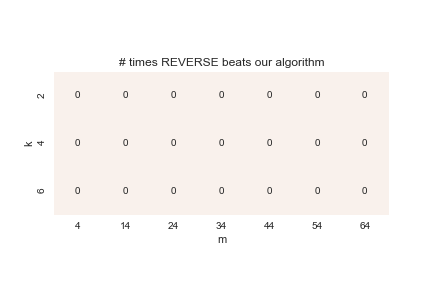}
		\caption{Number of cases for various value of $m$ and $k$ (out of $20$ experiments carried out for each $m,k$), where  \rev{} provided strictly better results compared to the C-LP method.}
		\label{subfig:res_c-w}
	\end{subfigure}
	\caption{Experimental results for Borda-WCM.}
	\label{fig:res-w}
\end{figure}

\section{Conclusions}
We have presented  additive approximations to the score of the highest-scoring non-preferred candidate in general scoring rules. It enables us to find a winning strategy for $p$ in cases where other methods would not necessarily have found one. Our method employs new techniques based on configuration linear programs. There are several interesting directions for future work, which vary from the concrete to the general:
\begin{enumerate}
  \item Understand in which instances different manipulation strategies outperform others.
  \item Find algorithms that can guarantee victory even if the margin is smaller, or prove $\NP$-hardness even if there is a solution with margin $O(\sqrt{m})$.
  \item Apply a C-LP to other voting methods.
  \item Apply  a C-LP to other problems in computational social choice, such as fair division of multiple indivisible goods.
\end{enumerate}

\section*{Acknowledgments}{
We thank Sarit Kraus and Ariel Procaccia for insightful discussions.

This work was supported by the Israel Science Foundation, under Grant No. 1488/14 and Grant No. 1394/16.
}

\bibliography{voting}
\bibliographystyle{abbrv}

\end{document}